\newcommand{\calc}{\mathcal{C}}
\newcommand{\Oh}{\mathcal{O}}
\newcommand{\cost}{{\sf cost}}
\newcommand{\cut}{{\sf Cut}}
\newcommand{\coord}{{\sf coord}}
\newcommand{\tr}{{\sf tr}}
\DeclareMathOperator{\operatorClassNP}{{\sf NP}}
\newcommand{\classNP}{\ensuremath{\operatorClassNP}}
\DeclareMathOperator{\operatorClassFPT}{{\sf FPT}\xspace}
\newcommand{\classFPT}{\ensuremath{\operatorClassFPT}\xspace}
\DeclareMathOperator{\operatorClassW}{{\sf W}}
\newcommand{\classW}[1]{\ensuremath{\operatorClassW[#1]}}
\newcommand{\bfA}{\mathbf{A}} 
\newcommand{\bfB}{\mathbf{B}} 
\newcommand{\bfC}{\mathbf{C}}
\newcommand{\bfO}{\mathbf{O}} 
\newcommand{\bfU}{\mathbf{U}}
\newcommand{\bfV}{\mathbf{V}}  
\newcommand{\bfa}{\mathbf{a}} 
\newcommand{\bfb}{\mathbf{b}} 
\newcommand{\bfc}{\mathbf{c}}
\newcommand{\bfR}{\mathbf{R}} 
\newcommand{\bfS}{\mathbf{S}} 
\newcommand{\bfx}{\mathbf{x}} 
\newcommand{\bfX}{\mathbf{X}} 
\newcommand{\bfY}{\mathbf{Y}} 
\newcommand{\bfZ}{\mathbf{Z}} 
\newcommand{\bfz}{\mathbf{z}} 
\newcommand{\bfy}{\mathbf{y}} 
\newcommand{\bfu}{\mathbf{u}} 
\newcommand{\bfv}{\mathbf{v}} 
\newcommand{\bfW}{\mathbf{W}} 
\newcommand{\bfw}{\mathbf{w}} 
\newtheorem{theorem}{Theorem}
\newtheorem{lemma}{Lemma}
\newtheorem{claim}{Claim}[section]
\newtheorem{corollary}{Corollary}
\newtheorem{observation}{Observation}
\newcommand{\pname}{\textsc}
\newcommand{\ProblemFormat}[1]{\pname{#1}}
\newcommand{\ProblemIndex}[1]{\index{problem!\ProblemFormat{#1}}}
\newcommand{\ProblemName}[1]{\ProblemFormat{#1}\ProblemIndex{#1}{}\xspace}
  \newcommand{\probmeanmed}{$k$-means/median}
 \newcommand{\probEMedClust}{\ProblemName{Explainable $k$-median}}
 \newcommand{\probEMeanClust}{\ProblemName{Explainable $k$-means}}  
 \newcommand{\probClustExpl}{\ProblemName{Clustering Explanation}} 
 \newcommand{\probFMedExpl}{\ProblemName{Approximate Explainable $k$-median}} 
 \newcommand{\probFMeanExpl}{\ProblemName{Approximate  Explainable  $k$-means}} 
  \newcommand{\probHS}{\ProblemName{Hitting Set}}
\newlength{\RoundedBoxWidth}
\newsavebox{\GrayRoundedBox}
\newenvironment{GrayBox}[1]%
   {\setlength{\RoundedBoxWidth}{.93\textwidth}
    \def\boxheading{#1}
    \begin{lrbox}{\GrayRoundedBox}
       \begin{minipage}{\RoundedBoxWidth}}%
   {   \end{minipage}
    \end{lrbox}
    \begin{center}
    \begin{tikzpicture}%
       \node(Text)[draw=black!20,fill=white,rounded corners,%
             inner sep=2ex,text width=\RoundedBoxWidth]%
             {\usebox{\GrayRoundedBox}};
        \coordinate(x) at (current bounding box.north west);
        \node [draw=white,rectangle,inner sep=3pt,anchor=north west,fill=white] 
        at ($(x)+(6pt,.75em)$) {\boxheading};
    \end{tikzpicture}
    \end{center}}     
\newenvironment{defproblemx}[2][]{\noindent\ignorespaces%
                                \FrameSep=6pt%
                                \parindent=0pt%
                \vspace*{-1.5em}
                \ifthenelse{\isempty{#1}}{%
                  \begin{GrayBox}{\textsc{#2}}%
                }{%
                  \begin{GrayBox}{\textsc{#2} parameterized by~{#1}}%
                }
                \begin{tabular*}{\textwidth}{@{\hspace{.1em}} >{\itshape} p{1.8cm} p{0.8\textwidth} @{}}%
            }{
                \end{tabular*}%
                \end{GrayBox}%
                \ignorespacesafterend
            }
\title{How to Find a Good Explanation for Clustering?}
\author[1]{Sayan Bandyapadhyay}
\author[1]{Fedor V. Fomin}
\author[1]{Petr A. Golovach}
\author[1]{William Lochet}
\author[1]{Nidhi Purohit}
\author[2]{Kirill Simonov}
\affil[1]{Department of Informatics, University of Bergen, Norway}
\affil[2]{Algorithms and Complexity Group, TU Wien, Vienna, Austria}
\affil[ ]{\{sayan.bandyapadhyay, fedor.fomin, petr.golovach, william.lochet, nidhi.purohit\}@uib.no, kirillsimonov@gmail.com}
\date{}
\begin{document}

\maketitle

\begin{abstract}

$k$-means and $k$-median clustering are powerful unsupervised machine learning techniques. However, due to complicated dependences on all the features, it is challenging to interpret the resulting cluster assignments. Moshkovitz, Dasgupta, Rashtchian, and  Frost  [ICML 2020] proposed an elegant model of explainable $k$-means and $k$-median clustering. In this model, a  decision tree with $k$ leaves provides a straightforward characterization of the data set into clusters.

 We study two natural algorithmic questions about explainable clustering.  (1) For a given clustering, how to find the ``best explanation'' by using a decision tree with $k$ leaves?  (2) For a given set of points, how to find a decision tree with $k$ leaves minimizing the $k$-means/median objective of the resulting explainable clustering?
To address the first question, we introduce a new model of explainable clustering. Our model, inspired by the notion of outliers in robust statistics, is the following. We are seeking a small number of points (outliers) whose removal makes the existing clustering well-explainable. For addressing the second question, we initiate the study of the model of Moshkovitz et al. from the perspective of multivariate complexity. Our rigorous algorithmic analysis sheds some light on the influence of parameters like the input size, dimension of the data, the number of outliers, the number of clusters, and the approximation ratio,  on the computational complexity of explainable clustering.

\end{abstract}


\section{Introduction }\label{sec:intro}
Interpretation or explanation of decisions produced by learning models, including clustering, is a significant direction in machine learning (ML) and artificial intelligence (AI), and has given rise to the subfield of Explainable AI. Explainable AI has attracted a lot of attention from the researchers in recent years (see the surveys by Carvalho et al.~\cite{carvalho2019machine} and Marcinkevi{\v{c}}s and Vogt~\cite{marcinkevivcs2020interpretability}). All these works can be divided into two main categories: \emph{pre-modelling} \cite{wang2015falling,ustun2016supersparse,hastie1986generalized,feng2017sparse,lu2018deeppink} and \emph{post-modelling} \cite{ribeiro2016should,shrikumar2017learning,breiman2001random,sundararajan2017axiomatic,lundberg2017unified} explainability. While post-modeling explainability focuses on giving reasoning behind decisions made by black box models, pre-modeling explainability deals with ML systems that are inherently understandable or perceivable by humans. 
 One of the canonical approaches to pre-modelling explainability builds on decision trees~\cite{molnar2020interpretable,murdoch2019interpretable}. In fact, a significant amount of work on explainable clustering is based on unsupervised decision trees \cite{BertsimasOW21,fraiman2013interpretable,geurts2007inferring,ghattas2017clustering,lipton2018mythos,MoshkovitzDRF20}. In each node of the decision tree, the data is partitioned according to some features' threshold value. 
While such a \textit{threshold tree} provides a clear interpretation of the resulting clustering, its cost measured by the standard \probmeanmed{} objective can be significantly worse than the cost of the optimal clustering. 
Thus,  on the one hand, the efficient algorithms developed for  \probmeanmed{} clustering~\cite{Aggarwalbook13}  are often challenging to explain. On the other hand, the easily explainable models could output very costly clusterings. Subsequently, Moshkovitz et al. \cite{MoshkovitzDRF20}, in a fundamental work, posed the natural algorithmic question of whether it is possible to kill two birds with one stone? To be precise, is it possible to design an efficient procedure for clustering that 
\begin{itemize}
\item Is explainable by a small decision tree; and
\item  Does not cost significantly more than  the cost of an optimal \probmeanmed{} clustering?
\end{itemize}

 To address this question,  Moshkovitz et al. \cite{MoshkovitzDRF20} introduced   explainable \probmeanmed{} clustering. In this scheme, a clustering is represented by a binary (\textit{threshold}) tree whose leaves correspond to clusters, and each internal node corresponds to partitioning a collection of points by a threshold on a fixed coordinate. Thus, the number of leaves in such a tree is $k$, the number of clusters sought. Also, any cluster assignment can be explained by the thresholds along the corresponding root-leaf path.  
  For example, consider Fig.~\ref{fig:fig}: Fig.~\ref{fig:sfig1}  shows an optimal $5$-means clustering of a 2D data set; Fig.~\ref{fig:sfig2} shows an explainable $5$-means clustering of the same data set; The threshold tree inducing the explainable clustering is shown in Fig.~\ref{fig:sfig3}.   The tree has five leaves, corresponding to 5 clusters. Note that in this model of explainability, any clustering has a clear geometric interpretation, where each cluster is formed by a set of axis-aligned cuts defined by the tree. 
 As Moshkovitz et al. argue, the classical $k$-means clustering algorithm leads to more complicated clusters while the threshold tree leads to an easy explanation. The advantage of the explainable approach becomes even more evident in higher dimensions when many feature values in $k$-means contribute to the formation of the clusters. 
 
\begin{figure}[h]
    \centering
  \begin{subfigure}[h]{.3\columnwidth}
    \centering
    \includegraphics[width=.95\linewidth]{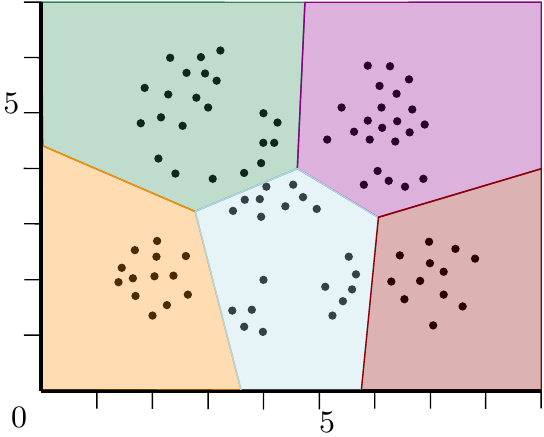}
    \subcaption{}
    \label{fig:sfig1}
  \end{subfigure}\hfill
  \begin{subfigure}[h]{.3\columnwidth}
    \centering
    \includegraphics[width=.95\linewidth]{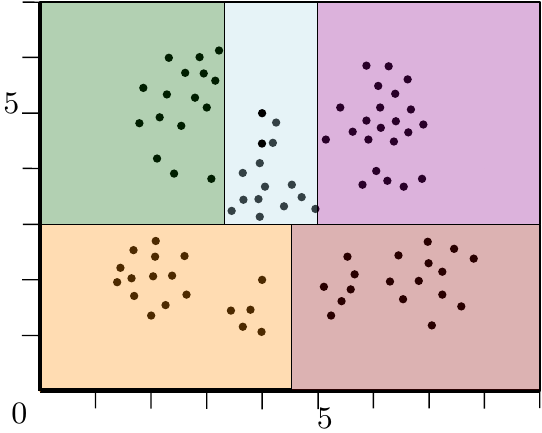}
    \subcaption{}
    \label{fig:sfig2}
  \end{subfigure}\hfill
  \begin{subfigure}[h]{.3\columnwidth}
    \centering
    \includegraphics[width=.95\linewidth]{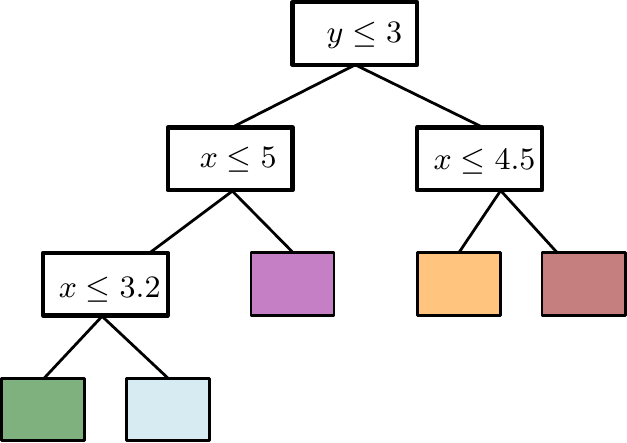}
    \subcaption{}
  \label{fig:sfig3}
  \end{subfigure}
  \caption{(a) An example  of an optimal solution to 5-means. (b) An explainable $5$-means clustering and (c) the corresponding threshold tree.}
  \label{fig:fig}
  \end{figure}

 Moshkovitz et al. \cite{MoshkovitzDRF20} define the quality of any explainable clustering as the ``cost of explainability'', that is the ratio of the cost of the explainable clustering to the cost of an optimal clustering.  Subsequently, they obtain efficient algorithms for computing explainable clusterings whose ``cost of explainability'' is $\Oh(k)$
for $k$-median and $\Oh(k^2)$ for $k$-means. They also show that the this ratio is at least $\Omega(\log k)$ in both cases. Recently, a series of works has been dedicated to improving these bounds.
In the low-dimensional setting, Laber and Murtinho~\cite{LaberM21} showed an upper bound of $\Oh(d \log k)$ and $\Oh(dk \log k)$ for $k$-median and $k$-means respectively.
In general, Makarychev and Shan~\cite{MakarychevS21}, Gamlath, Jia, Polak, and Svensson~\cite{GamlathJPS21} Esfandiari, Mirrokni, and Narayanan~\cite{EsfandiariMN21} showed independently a $k (\log k)^{\Oh(1)}$ upper bound for $k$-means and a $(\log k)^{\Oh(1)}$ upper bound for $k$-median, while also improving the lower bound for $k$-means to $\Omega(k)$. For low dimensions this was improved by Charikar and Hu~\cite{CharikarH21}, who showed an upper bound of $k^{1 - 2/d} (\log k)^{\Oh(1)}$ for $k$-means.

\medskip\noindent\textbf{Our contributions.} In this work, we propose a new model for explaining a clustering, called \probClustExpl. 
Our approach to explainability is inspired by the research on robustness in statistics and machine learning, especially the vast field of outlier detection and removal in the context of clustering \cite{chen2008constant,10.1145/3301446,feng2019improvedFPT,charikar2001algorithms,ChakrabartyGK16,harris2019lottery,krishnaswamy2018constant}. 
 In this model, we are given a \probmeanmed{} clustering and we would like to explain the clustering by a threshold tree \emph{after removing a subset of points}. To be precise, we are interested in finding a subset of points $S$ (which are to be removed) and a threshold tree $T$ such that the explainable clustering induced by the leaves of $T$ is exactly the same as the given clustering after removing the points in $S$. For the given clustering, we define an optimal (or {best}) explainable clustering to be the one that minimizes the size of $S$, i.e. for which the given clustering can be explained by removing the minimum number of points. Thus in  
  \probClustExpl  we measure the ``explainability'' as the number of outlying points whose removal turns the given clustering into  an explainable clustering. The reasoning behind the new measure of cluster explainability is the following.  In certain situations, we would be satisfied with a small decision tree explaining  clustering of all but a few outlying data points.  We note that for a given clustering that is already an explainable clustering, i.e. can be explained by a threshold tree, the size of $S$ is 0.  

 In  Fig.~\ref{fig:fig1}, we provide an example of an optimal  $5$-means clustering of exactly the same data set as  in Fig.~\ref{fig:fig}. However, the new explainable clustering is obtained in a different way. If we remove a small number of points (in  Fig.~\ref{fig:outliers} these are the 9 red larger points), then the explainable clustering is same as the optimal clustering after removing those 9 points.  
 
\begin{figure}[h]
  \label{fig:1}
  \centering

  \begin{subfigure}[t]{.47\columnwidth}
    \centering
    \includegraphics[width=.85\linewidth]{figure_1.pdf}
    \subcaption{}
    \label{fig:opt1}
  \end{subfigure}\hfill
  \begin{subfigure}[t]{.47\columnwidth}
    \centering
    \includegraphics[width=.85\linewidth]{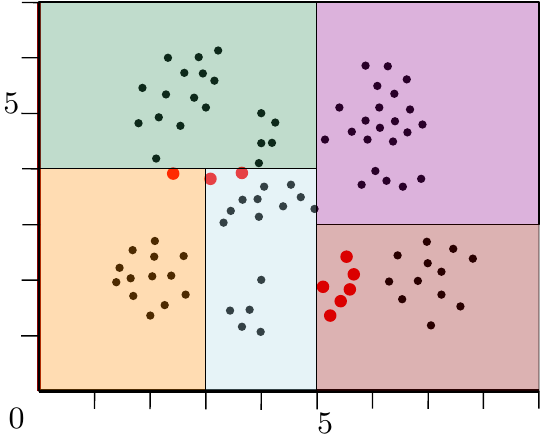}
    \subcaption{}
    \label{fig:outliers}
  \end{subfigure}%
  \caption{(a) An optimal $5$-clustering and (b) an explainable clustering that fits this clustering after removing the larger (red) points. }
  \label{fig:fig1}
  \end{figure}

We note that \probClustExpl corresponds to the classical machine learning setting of interpreting a black-box model, i.e. it lies within the scope of post-modeling explainability. Surprisingly, this area is widely unexplored when it comes to rigorous algorithmic analysis of clustering explanation. Consequently, we study \probClustExpl from the perspective of computational complexity. Our new model naturally raises the following algorithmic questions: (i) \emph{Given a clustering, how efficiently can one decide whether  the clustering can be explained by a threshold tree (without removing any points)?} and (ii) \emph{Given a clustering and an integer $s$, how efficiently can one decide whether the clustering can be explained by removing $s$ points?} 
%

In our work, we design a polynomial time algorithm that resolves the first question. 
Regarding the second question, we give an algorithm that in time 
$2^{2\min\{s,k\}}\cdot n^{2d}\cdot (dn)^{\Oh(1)}$ decides whether a given  clustering of $n$ points in $\mathbb{R}^d$ could be 
 explained by removing $s$ points. 
 We also give an $n^{\Oh(1)}$ time  $(k-1)$-approximation algorithm for \probClustExpl. That is, 
 we give a polynomial time algorithm that returns a solution set of at most $s(k-1)$ points that are to be removed, whereas any  best explainable clustering removes $s$ points. Moreover, we provide an efficient data reduction procedure that reduces an instance of   \probClustExpl  to an equivalent instance with at most $r=2(s+1)dk$ points in $\mathbb{R}^d$ with integer coordinates within the range  $\{1,\ldots, r\}$. The procedure can be used to speed up \emph{any} algorithm for \probClustExpl, as long as $n > 2(s + 1) dk$.
 We complement our algorithms by showing a hardness lower bound. In particular, we show that \probClustExpl cannot be approximated within a factor of $F(s)$ in time $f(s) (nd)^{o(s)}$, for any functions $F$ and $f$, unless Exponential Time Hypothesis (ETH)~\cite{ImpagliazzoPZ01} fails. All these results appear in Section \ref{sec:explanation}.

We also provide new insight into the computational complexity of the model of Moshkovitz et al. \cite{MoshkovitzDRF20}.  While   the vanilla $k$-median and $k$-means problems are NP-hard  
for $k=2$ \cite{AloiseDHP09,DrineasFKVV04,dasgupta2008hardness} or $d=2$ \cite{MahajanNV12},  this is not the case for explainable clustering! We design two simple algorithms  computing optimal (best) explainable clustering with \probmeanmed{} objective that run in time $(4nd)^{k+\Oh(1)}$ and $n^{2d}\cdot n^{\Oh(1)}$, respectively. Hence  for constant $k$ or constant $d$, an optimal explainable clustering can be computed in polynomial time. 
The research on approximation algorithms on the ``cost of explainability'' in~\cite{MoshkovitzDRF20,CharikarH21,EsfandiariMN21,GamlathJPS21,LaberM21,MakarychevS21} implicitly assumes that solving the problem exactly is NP-hard. However,   we did not find a  proof of this fact in the literature.  To fill this gap, we obtain the following  hardness lower bound: An optimal explainable clustering cannot be found in $f(k)\cdot n^{o(k)}$ time for any computable function $f(\cdot)$, unless Exponential Time Hypothesis (ETH) fails. This lower bound demonstrates that asymptotically the running times of our simple algorithms are unlikely to be improved. 
Our reduction also yields that the problem is NP-hard.
These results are described in Section \ref{sec:explainable}.

Finally, we combine the above two explainability models 
 to obtain the Approximate Explainable Clustering model: For a collection of $n$ points in $\mathbb{R}^d$ and a positive real constant $\varepsilon<1$, we seek whether we can
   identify at most $\varepsilon n$ outliers, such that the cost of explainable \probmeanmed{} of the remaining points does not exceed the optimal cost of an explainable \probmeanmed{} clustering of the original data set. Thus, if we are allowed to remove a small number of points, can we do as good as any original optimal solution? 
   While our hardness result of Section~\ref{sec:explainable} holds for explaining the whole dataset, by ``sacrificing'' a small fraction of points it might be possible to solve the problem more efficiently.
 And indeed, for this model, we obtain an algorithm whose running time $(\frac{8dk}{\epsilon})^k\cdot n^{\Oh(1)}$ has a significantly better dependence on $d$ and $k$. For example, compare this with the above time bounds of $(4nd)^{k+\Oh(1)}$ and $n^{2d}\cdot (dn)^{\Oh(1)}$. This algorithm appears in Section \ref{sec:fractional}. See Table \ref{table:1} for a summary of all our results. 

\renewcommand{\arraystretch}{1.5}
\begin{center}
\begin{table}[t]
\centering
\begin{tabular}{ |c|c|c| }
 \hline
 \centering \textbf{Model} & \textbf{Algorithms/Upper bounds} & \textbf{Hardness/Lower bounds} \\ 
 \hline
 \centering Clustering Explanation  & \makecell{$2^{2\min\{s,k\}}n^{2d}n^{\Oh(1)}$\\$(k-1)$-approximation\\Reduction to $\Oh(sdk)$ points }& No $F(s)$-approx. in $f(s) (nd)^{o(s)}$\\ 
\hline
\centering Explainable Clustering  & \centering $(4nd)^{k+\Oh(1)}$ $n^{2d}\cdot n^{\Oh(1)}$ & $f(k)\cdot n^{o(k)}$ \\
\hline
\centering \makecell{Approximate\\Explainable Clustering} & \centering $(\frac{8dk}{\epsilon})^k\cdot n^{\Oh(1)}$ & \\
\hline
\end{tabular}
\caption{A summary of our results.}
\label{table:1}
\end{table}
\end{center}


%
%
%




\section{Preliminaries}\label{sec:prelim} 


\paragraph{\probmeanmed.}
Given a collection $\bfX=\{\bfx_1,\ldots,\bfx_n\}$ of $n$ points in $\mathbb{R}^d$ and a positive integer $k$, the task of \emph{$k$-clustering} is to partition $\bfX$ into $k$ parts $\bfC_1,\ldots,\bfC_k$, called \emph{clusters}, such that the \emph{cost} of clustering is minimized. We follow the convention in the previous work  \cite{MoshkovitzDRF20} for defining the cost. 
In particular, for $k$-means, we consider the Euclidean distance and for $k$-median, the Manhattan distance.
 For a collection of points $\bfX'$ of $\mathbb{R}^d$, we define 
\begin{equation}\label{eq:mean}
\cost_2(\bfX')=\min_{\bfc\in\mathbb{R}^d}\sum_{\bfx\in \bfX'}\|\bfc-\bfx\|_2^2,
\end{equation}
and call the point $\bfc^*\in \mathbb{R}^d$ minimizing the sum in  (\ref{eq:mean}) the \emph{mean} of  $\bfX'$.
For a clustering $\{\bfC_1,\ldots,\bfC_k\}$ of $\bfX\subseteq \mathbb{R}^d$, its \emph{$k$-means} (or simply means) cost is 
$\cost_2(\bfC_1,\ldots,\bfC_k)=\sum_{i=1}^k  \cost_2(\bfC_i)$.
With respect to the Manhattan distance, we define analogously $\cost_1(\bfX')=\min_{\bfc\in\mathbb{R}^d}\sum_{\bfx\in \bfX'}\|\bfc-\bfx\|_1$, which is minimized at the \emph{median} of $\bfX'$,  and $\cost_1(\bfC_1,\ldots,\bfC_k) = \sum_{i=1}^k  \cost_1(\bfC_i)$, which we call the \emph{$k$-median} (or simply median) cost of the clustering.




 
\paragraph{Explainable clustering.} For a vector $\bfx\in\mathbb{R}^d$, we use $\bfx[i]$ to denote the $i$-th element (coordinate) of the vector for $i\in\{1,\ldots,d\}$. 
Let $\bfX$ be a collection of points of $\mathbb{R}^d$. For $i
\in\{1,\ldots,d\}$ and $\theta\in\mathbb{R}$, we define $\cut_{i,\theta}(\bfX)=(\bfX_1,\bfX_2)$, where $\{\bfX_1,\bfX_2\}$ is a partition of $\bfX$ with 
\begin{equation*} 
\bfX_1=\{\bfx\in \bfX\mid \bfx[i]\leq \theta\} \text{ and }\bfX_2=\{\bfx\in \bfX\mid \bfx[i]>\theta\}.
\end{equation*}
Then, given a collection $\bfX\subseteq \mathbb{R}^d$ and a positive integer $k$, we cluster $\bfX$ as follows. 
If $k=1$, then $\bfX$ is the unique cluster.  If $k=2$, then we choose $i\in \{1,
\ldots,d\}$ and $\theta\in\mathbb{R}$ and construct two clusters $\bfC_1$ and $
\bfC_2$, where $(\bfC_1,\bfC_2)=\cut_{i,\theta}(\bfX)$. For $k>2$, we select $i\in \{1,
\ldots,d\}$ and $\theta\in\mathbb{R}$, and construct a partition $(\bfX_1,\bfX_2)=\cut_{i,\theta}(\bfX)$ of $\bfX$. 
Then clustering of $\bfX$ is defined recursively as the union of a $k_1$-clustering of $\bfX_1$ and a $k_2$-clustering of $\bfX_2$ for some 
  integers $k_1$ and $k_2$ such that $k_1+k_2=k$.
We say that a clustering $\{\bfC_1,\ldots,\bfC_k\}$ is an \emph{explainable $k$-clustering} of a collection of points $\bfX\subseteq \mathbb{R}^d$ if  $\bfC_1,\ldots,\bfC_k$ can be constructed by the described procedure.


\paragraph{Threshold tree.} It is useful to represent an explainable $k$-clustering as 
a triple $(T,k,\varphi)$, called a \emph{threshold tree}, where $T$ is a rooted binary tree with $k$ leaves, where each nonleaf node has two children called \emph{left} and \emph{right}, respectively, and $\varphi\colon U\rightarrow \{1,
\ldots,d\}\times \mathbb{R}$, where $U$ is the set of nonleaf nodes of $T$. For each node $v$ of $T$, we compute a collection of points $\bfX_v\subseteq \bfX$. For the root $r$, $\bfX_r=\bfX$.   
Let $v$ be a nonleaf node of $T$ and let $u$ and $w$ be its left and right children, respectively,
and assume that $\bfX_v$ is constructed.  We compute  $(\bfX_u,\bfX_w)=\cut_{\varphi(v)}(\bfX)$. If $v$ is a leaf, then $\bfX_v$ is a cluster. 
A clustering $\{\bfC_1,\ldots,\bfC_k\}$ is an explainable $k$-clustering of a collection of points $\bfX\subseteq \mathbb{R}^d$ if there is a threshold tree $(T,k,\varphi)$ such that $\bfC_1,\ldots,\bfC_k$ are the clusters corresponding to the leaves of $T$. Note that $T$ is a full binary tree with $k$ leaves and the total number of such trees is the $(k-1)$-th Catalan number, which is upper bounded by $4^k$.

For a collection $\bfX=\{\bfx_1,\ldots,\bfx_n\}$ of $n$ points and $i\in\{1,\ldots,d\}$, we denote by $\coord_i(\bfX)$ the set of distinct  values of $i$-th coordinates $\bfx_j[i]$ for $j\in\{1,\ldots,n\}$.  
It is easy to observe that in the construction of a threshold tree for a set of points $\bfX\subseteq \mathbb{R}^d$, it is sufficient to consider cuts $\cut_{i,\theta}$ with $\theta\in\coord_i(\bfX)$; we call such values of $\theta$ and cuts  \emph{canonical}. We say that a threshold tree $(T,k,\varphi)$ for a collection of points $\bfX\subseteq \mathbb{R}^d$ is \emph{canonical}, if for every nonleaf node $u\in V(T)$, $\varphi(u)=(i,\theta)$ where $\theta\in \coord_i(\bfX)$. 
Throughout the paper we consider only canonical threshold trees. 

\paragraph{Parameterized complexity and ETH.} 
A \emph{parameterized problem} $\Pi$ is a subset of $\Sigma^*\times \mathbb{N}$, where $\Sigma$ is a finite alphabet. Thus, an instance of $\Pi$ is a pair $(I,k)$, where $I\subseteq\Sigma^*$ and $k$ is a nonnegative integer called a \emph{parameter}. It is said that a parameterized problem $\Pi$ is \emph{fixed-parameter tractable} (\classFPT) if it can be solved in $f(k)\cdot |I|^{\Oh(1)}$ time for some computable function $f(\cdot)$. 
The parameterized complexity theory also provides tools to refute the existence of an \classFPT algorithm for a parameterized problem. The standard way is to show that the considered problem is hard in the parameterized complexity classes \classW{1} or \classW{2}. We refer to the book~\cite{CyganFKLMPPS15} for the formal definitions of the parameterized complexity classes.  The basic complexity assumption  of the theory is that for the class  \classFPT, formed by all parameterized fixed-parameter tractable problems, $\classFPT\subset \classW{1}\subset \classW{2}$. The hardness is proved by demonstrating a parameterized reduction from a problem known to be hard in  the considered complexity class.  
A \emph{parameterized reduction} is a many-one reduction that takes an input $(I,k)$ of the first problem, and in $f(k)|I|^{\Oh(1)}$ time outputs an equivalent instance $(I',k')$ of the second problem with $k'\leq g(k)$, where $f(\cdot)$ and $g(\cdot)$ are computable functions. Another way to obtain lower bounds is to use the \emph{Exponential Time Hypothesis (ETH)} formulated by Impagliazzo, Paturi and Zane~\cite{ImpagliazzoP99,ImpagliazzoPZ01}. For an integer $k\geq 3$, let $q_k$ be the infimum of the real numbers $c$ such that the \textsc{$k$-Satisfiability} problem can be solved in time $\Oh(2^{c n})$, where $n$ is the number of variables. Exponential Time Hypothesis states that $\delta_3>3$. In particular, ETH implies that  \textsc{$k$-Satisfiability} cannot be solved in time $2^{o(n)}n^{\Oh(1)}$.

\section{Clustering Explanation}\label{sec:explanation}



\paragraph{Clustering explanation.}  In  the \probClustExpl problem, the input contains a $k$-clustering $\{\bfC_1,\ldots,\bfC_k\}$ of $\bfX\subseteq \mathbb{R}^d$ and a nonnegative integer $s$, and the task is to decide whether there is a collection of points $W\subseteq \bfX$ with $|W|\leq s$ such that $\{\bfC_1\setminus W,\ldots,\bfC_k\setminus W\}$ is an explainable $k$-clustering. Note that some $\bfC_i\setminus W$ may be empty here.  

\subsection{A Polynomial-time $(k-1)$-Approximation}

In the optimization version of \probClustExpl, we are given a $k$-clustering $\calc=\{\bfC_1,\ldots,\bfC_k\}$ of $\bfX$ in $\mathbb{R}^d$, and the goal is to find a minimum-sized subset $W\subseteq \bfX$ such that $\{\bfC_1\setminus W,\ldots,\bfC_k\setminus W\}$ is an explainable clustering. In the following, we design an approximation algorithm for this problem based on a greedy scheme. 

For any subset $W\subseteq \bfX$, let $\calc-W=\{\bfC_1\setminus W,\ldots,\bfC_k\setminus W\}$. Also, for any subset $Y \subseteq \bfX$, define the clustering induced by $Y$ as $\calc(Y)= \{\bfC_1\cap Y,\ldots,\bfC_k\cap Y\}$. Denote by OPT$(Y)$ the size of the minimum-sized subset $W$ such that the clustering $\calc(Y)-W$ is explainable. First, we have the following simple observation which follows trivially from the definition of OPT$(.)$.  

\begin{observation}\label{obs:subsetcheaper}
 For any subset $Y \subseteq \bfX$, \emph{OPT}$(Y)\le $ \emph{OPT}$(\bfX)$. 
\end{observation}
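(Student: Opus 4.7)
The plan is to show that the optimal solution for the full set $\bfX$ restricts naturally to a feasible solution on any subset $Y$, and hence cannot be worse.

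Concretely, let $W^* \subseteq \bfX$ be an optimal solution for $\bfX$, so that $|W^*| = \mathrm{OPT}(\bfX)$ and $\calc - W^* = \{\bfC_1 \setminus W^*, \ldots, \bfC_k \setminus W^*\}$ is an explainable $k$-clustering. By definition, there is a (canonical) threshold tree $(T, k, \varphi)$ whose leaves realize exactly the clusters $\bfC_i \setminus W^*$ when applied to the point set $\bfX \setminus W^*$. Now set $W := W^* \cap Y$, which clearly satisfies $|W| \leq |W^*|$. I claim $W$ is feasible for the instance $\calc(Y)$, which immediately gives $\mathrm{OPT}(Y) \leq |W| \leq |W^*| = \mathrm{OPT}(\bfX)$.

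To verify feasibility, first compute
\[
\calc(Y) - W = \{(\bfC_i \cap Y) \setminus (W^* \cap Y)\}_{i=1}^k = \{(\bfC_i \setminus W^*) \cap Y\}_{i=1}^k.
\]
The key step is to argue that applying the very same threshold tree $(T, k, \varphi)$ to the point set $Y \setminus W^* = (\bfX \setminus W^*) \cap Y$ produces precisely this collection of clusters. This follows from the fact that each internal node of $T$ defines an axis-aligned cut $\cut_{\varphi(v)}$ that acts on individual points independently: a point $\bfx \in Y \setminus W^*$ traces exactly the same root-to-leaf path in $T$ as it does when $T$ is applied to all of $\bfX \setminus W^*$, so it lands in the same cluster. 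Hence each leaf of $T$, when restricted to $Y \setminus W^*$, yields exactly $(\bfC_i \setminus W^*) \cap Y$, showing that $\calc(Y) - W$ is explainable. (Empty clusters among the $(\bfC_i \setminus W^*) \cap Y$ are allowed by the definition of \probClustExpl.)

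There is essentially no obstacle here; the only subtlety worth flagging is that the threshold-tree model is ``pointwise monotone'' in the sense above, so feasibility transfers from any superset to any subset of the input. Once this is observed, the inequality is immediate.
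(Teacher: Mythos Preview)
Your proof is correct and is precisely the argument the paper leaves implicit: the paper states only that the observation ``follows trivially from the definition of $\mathrm{OPT}(\cdot)$'' without spelling out a proof. What you have written is the natural unpacking of that remark---restrict an optimal deletion set $W^*$ for $\bfX$ to $W^*\cap Y$ and reuse the same threshold tree, using that cuts act on points independently---so there is no substantive difference in approach.
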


For any cut $(i,\theta)$ where $i\in \{1,
\ldots,d\}$ and $\theta\in\coord_i(\bfX)$, let $L(i,\theta)=\{\bfx\in \mathbb{R}^d\mid \bfx[i]\le \theta\}$ and  $R(i,\theta)=\{\bfx\in \mathbb{R}^d\mid \bfx[i]> \theta\}$. 

\begin{lemma}\label{lem:cheapcut}
 Consider any subset $Y\subseteq \bfX$ such that $\calc(Y)$ contains at least two non-empty clusters. It is possible to select a cut $(i,\theta)$ for $i\in \{1,
\ldots,d\}$ and $\theta\in \coord_i(Y)$, and a subset $W\subseteq Y$, in polynomial time, such that (i) each cluster in $\calc(Y)-W$ is fully contained in either $L(i,\theta)$ or in  $R(i,\theta)$, (ii) at least one cluster in $\calc(Y)-W$ is in $L(i,\theta)$, (iii) at least one cluster in $\calc(Y)-W$ is in $R(i,\theta)$ and (iv) size of $W$ is at most \emph{OPT}$(Y)$. 
\end{lemma}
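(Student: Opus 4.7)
The plan is to enumerate all canonical cuts $(i,\theta)$ with $i\in\{1,\ldots,d\}$ and $\theta\in\coord_i(Y)$ --- at most $d|Y|$ candidates --- and, for each candidate, compute in polynomial time the cheapest subset $W\subseteq Y$ making that particular cut satisfy conditions (i)--(iii). The algorithm outputs the cut attaining the global minimum.

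For a fixed cut $(i,\theta)$, write $\ell_j=|\bfC_j\cap Y\cap L(i,\theta)|$ and $r_j=|\bfC_j\cap Y\cap R(i,\theta)|$. Condition (i) forces every non-empty cluster $\bfC_j\cap Y$ to be assigned entirely to one side: keeping it on $L$ costs $r_j$, keeping it on $R$ costs $\ell_j$. Conditions (ii) and (iii) require that some cluster assigned to $L$ has $\ell_j>0$ and some cluster assigned to $R$ has $r_j>0$. First assign every non-empty cluster to its cheaper side, paying $\min(\ell_j,r_j)$; if both sides receive a qualifying cluster, output the greedy cost. Otherwise --- which happens only when, say, $r_j\le\ell_j$ for every non-empty cluster so that greedy puts everyone on $L$ --- flip to $R$ the single cluster with $r_j>0$ that minimises the surplus $\ell_j-r_j$ (such a cluster exists because $\calc(Y)$ contains two non-empty clusters and some cut separates them). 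Each per-cut evaluation takes $O(k)$ time after $O(dn)$ preprocessing, so the total running time is polynomial in $dn$.

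To establish property (iv), fix an optimum $W^{*}\subseteq Y$ with $|W^{*}|=\mathrm{OPT}(Y)$ and a canonical threshold tree $(T^{*},k,\varphi^{*})$ for $\calc(Y)-W^{*}$. Let $(i^{*},\theta^{*})=\varphi^{*}(\text{root})$; because the tree is canonical we have $\theta^{*}\in\coord_{i^{*}}(Y\setminus W^{*})\subseteq\coord_{i^{*}}(Y)$, so this cut is enumerated by the algorithm. The root cut partitions the non-empty clusters of $\calc(Y)-W^{*}$ into two non-empty subfamilies $\calc_{L}$ and $\calc_{R}$. For each $\bfC_j\in\calc_{L}$, every point of $\bfC_j\cap Y\cap R(i^{*},\theta^{*})$ must lie in $W^{*}$ (otherwise $\bfC_j\cap Y\setminus W^{*}$ would meet $R$), and symmetrically for $\calc_{R}$. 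Therefore the assignment that sends each $\bfC_j\in\calc_{L}$ to $L$ and each $\bfC_j\in\calc_{R}$ to $R$ is feasible for the per-cut problem at $(i^{*},\theta^{*})$, has cost at most $|W^{*}|$, and satisfies (ii) and (iii). Since the algorithm selects the minimum over cuts and over all feasible assignments per cut, its output satisfies $|W|\le|W^{*}|=\mathrm{OPT}(Y)$.

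The main obstacle is the constraint that both sides must retain a non-empty cluster. A purely local, per-cluster greedy can leave one side empty, so some care is required in the fix-up step to avoid violating (ii) or (iii) while keeping the cost bounded by $\mathrm{OPT}(Y)$. The single-flip argument above resolves this because the algorithm's minimum for the root cut $(i^{*},\theta^{*})$ is bounded by the cost of the optimal-tree assignment, regardless of whether greedy alone would have chosen it.
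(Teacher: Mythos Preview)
Your proof contains a genuine gap. The step ``The root cut partitions the non-empty clusters of $\calc(Y)-W^{*}$ into two non-empty subfamilies $\calc_{L}$ and $\calc_{R}$'' is not justified: nothing prevents an optimal $W^{*}$ from deleting all but one cluster entirely. Concretely, take $d=1$, $k=2$, $C_1\cap Y=\{0,2,4,6,8,10\}$, $C_2\cap Y=\{5\}$. Here $\mathrm{OPT}(Y)=1$, achieved by $W^{*}=\{5\}$, and the canonical root cut of the resulting tree is $(1,10)$, which puts everything on the left. Your per-cut procedure insists on a cluster with $r_j>0$ on the right, so it declares $(1,10)$ infeasible; over all cuts, the cheapest assignment satisfying your version of (ii)--(iii) costs $3$, strictly above $\mathrm{OPT}(Y)$. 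Thus property (iv) fails for your construction.

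The paper's proof handles exactly this situation through its Case~1 (all clusters have a strict majority in $L$): instead of flipping a cluster that is already non-empty on the right, it selects the cluster with the \emph{smallest} left part, removes that left part entirely, and sends the (possibly empty) remainder to $R$. Conditions (ii) and (iii) in the lemma are then read as allowing the cluster placed on one side to be empty; the bound on the number of cuts still holds because the left child always receives strictly fewer non-empty clusters than the parent. To repair your argument you must either adopt this asymmetric rule (pay $\ell_j$ to empty one cluster rather than merely flip it) or else argue separately that some optimal $W^{*}$ leaves at least two clusters non-empty---which, as the example shows, is not always the case.
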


Before we prove this lemma, we show how to use it to design the desired approximation algorithm. 

\paragraph{The Algorithm.} We start with the set of all points $\bfX$. We apply the algorithm in Lemma \ref{lem:cheapcut} with $Y=\bfX$ to find a cut $(i,\theta)$  and a subset $W_1\subseteq \bfX$ such that each cluster in $\calc(\bfX)-W_1$ is fully contained in either $L(i,\theta)$ or in  $R(i,\theta)$. Let $\bfX_1=(\bfX\setminus W_1)\cap L(i,\theta)$ and $\bfX_2=(\bfX\setminus W_1)\cap R(i,\theta)$. We recursively apply the above step on both $\bfX_1$ and $\bfX_2$ separately. If at some level the point set is a subset of a single cluster, we simply return.  

The correctness of the above algorithm trivially follows from Lemma \ref{lem:cheapcut}. In particular, the recursion tree of the algorithm gives rise to the desired threshold tree. Also, the algorithm runs in polynomial time, as each successful cut $(i,\theta)$ can be found in polynomial time and the algorithm  finds only $k-1$ such cuts that separate the clusters. The last claim follows due to the properties (ii) and (iii) in Lemma \ref{lem:cheapcut}. 

Consider the threshold tree generated by the algorithm. For each internal node $u$, let $X_u$ be the corresponding points and $W_u$ be the points removed from $X_u$ for finding an explainable clustering of the points in $X_u\setminus W_u$. Note that we have at most $k-1$ such nodes. The total number of points removed from $\bfX$ for finding the explainable clustering is $\sum_u |W_u|$. By Lemma \ref{lem:cheapcut}, \[|W_u|\le  \text{OPT}(X_u).\] Now, as $X_u \subseteq \bfX$, by Observation \ref{obs:subsetcheaper}, OPT$(X_u)\le $ OPT$(\bfX)$. It follows that \[\sum_u |W_u|\le (k-1)\cdot \text{OPT}(\bfX). \]

\begin{theorem}\label{thm:k-appr}
 There is a polynomial-time $(k-1)$-approximation algorithm for the optimization version of \probClustExpl. 
\end{theorem}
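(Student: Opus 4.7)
The plan is to apply Lemma~\ref{lem:cheapcut} recursively in a top-down fashion to construct a threshold tree for the explainable clustering. Starting with $Y=\bfX$, I would invoke the lemma to obtain a canonical cut $(i_1,\theta_1)$ and a removal set $W_1\subseteq\bfX$ with $|W_1|\le \text{OPT}(\bfX)$ such that every cluster of $\calc(\bfX)-W_1$ lies entirely on one side of the cut and both sides are non-empty. Setting $\bfX_1=(\bfX\setminus W_1)\cap L(i_1,\theta_1)$ and $\bfX_2=(\bfX\setminus W_1)\cap R(i_1,\theta_1)$, I would recurse on each side independently, stopping as soon as the current point set intersects only one of the original clusters.

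The recursion tree produced this way is, by construction, a canonical threshold tree, and the union $W=\bigcup_u W_u$ of all removed points (over the internal nodes $u$) is a feasible solution: after removing $W$, every leaf of the tree corresponds to the intersection of the remaining points with exactly one cluster $\bfC_j$, so $\calc-W$ is an explainable $k$-clustering. Because properties (ii) and (iii) of Lemma~\ref{lem:cheapcut} force each invocation to produce two non-empty subproblems, each internal node of the tree contributes at least one new leaf, so the lemma is invoked at most $k-1$ times. Since each invocation runs in polynomial time by Lemma~\ref{lem:cheapcut}, the whole algorithm runs in polynomial time.

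For the approximation ratio, the total number of removed points is $\sum_u |W_u|$ where the sum is over the at most $k-1$ internal nodes of the tree. Property (iv) of Lemma~\ref{lem:cheapcut} gives $|W_u|\le \text{OPT}(X_u)$, and since $X_u\subseteq \bfX$, Observation~\ref{obs:subsetcheaper} yields $\text{OPT}(X_u)\le \text{OPT}(\bfX)$. Combining these,
\[
\sum_u |W_u| \;\le\; (k-1)\cdot \text{OPT}(\bfX),
\]
which is the claimed factor $(k-1)$.

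The main obstacle is of course Lemma~\ref{lem:cheapcut} itself, which is stated earlier and may be assumed here. The delicate part in that lemma is showing that one can produce a cut separating at least two clusters while paying at most $\text{OPT}(Y)$ in deletions: intuitively, one would examine an optimal threshold tree for $\calc(Y)$, take its root cut $(i,\theta)$, and charge the removed set $W$ to the points of $Y$ that lie on the ``wrong'' side of the cut relative to their cluster; since an optimal solution must itself remove at least those conflicting points at the root of its tree, $|W|\le \text{OPT}(Y)$ follows. Making this argument algorithmic requires enumerating the polynomially many canonical cuts, computing for each the minimum removal set that makes it cluster-respecting, and selecting one that works — but the feasibility and running-time aspects are precisely what Lemma~\ref{lem:cheapcut} provides, so the theorem follows immediately from the algorithm and analysis above.
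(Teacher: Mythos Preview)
Your proposal is correct and follows essentially the same approach as the paper: both apply Lemma~\ref{lem:cheapcut} recursively starting from $\bfX$, use properties (ii) and (iii) to bound the number of invocations by $k-1$, and combine property (iv) with Observation~\ref{obs:subsetcheaper} to obtain the $(k-1)\cdot\text{OPT}(\bfX)$ bound. Your added intuition about how Lemma~\ref{lem:cheapcut} is proved (enumerate canonical cuts, charge against the root cut of an optimal tree) is also in line with the paper's argument, though that lemma is indeed treated as a black box for the theorem itself.
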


By noting that OPT$(\bfX)=0$ if $\calc$ is an explainable clustering, we obtain the following corollary.  

\begin{corollary}\label{cor:expl}
 Explainability of any given $k$-clustering in $\mathbb{R}^d$ can be tested in polynomial time. 
\end{corollary}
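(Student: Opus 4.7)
The plan is to derive the corollary as a direct consequence of Theorem~\ref{thm:k-appr}. First I would observe that the given clustering $\calc = \{\bfC_1,\ldots,\bfC_k\}$ is explainable if and only if $\text{OPT}(\bfX) = 0$: if $\calc$ is explainable, then the witnessing threshold tree explains $\calc$ without removing any point, so the empty set is a feasible (hence optimal) solution for the optimization version; conversely, if $\text{OPT}(\bfX) = 0$, then $\calc = \calc - \emptyset$ is already explainable by definition. This equivalence is the only conceptual content needed.

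Next I would run the polynomial-time $(k-1)$-approximation algorithm of Theorem~\ref{thm:k-appr} on the input $\calc$, obtaining a set $W \subseteq \bfX$ together with the threshold tree produced by its recursion. Since the algorithm is a proper approximation, we have $\text{OPT}(\bfX) \le |W| \le (k-1)\cdot \text{OPT}(\bfX)$. Both inequalities squeeze $|W|$ to $0$ exactly when $\text{OPT}(\bfX) = 0$: if $\text{OPT}(\bfX) = 0$ then $|W| \le (k-1)\cdot 0 = 0$, and if $\text{OPT}(\bfX) \ge 1$ then $|W| \ge 1$. Combined with the equivalence in the previous paragraph, this means $\calc$ is explainable if and only if the algorithm returns $W = \emptyset$. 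Moreover, when $W = \emptyset$, the recursion tree constructed by the algorithm is itself a canonical threshold tree witnessing explainability of $\calc$, so the test not only decides the question but also produces the certificate. Since the whole procedure runs in polynomial time by Theorem~\ref{thm:k-appr}, the corollary follows, with essentially no obstacle beyond invoking the theorem.
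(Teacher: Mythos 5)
Your argument is exactly the paper's: the corollary is derived from Theorem~\ref{thm:k-appr} by noting that $\mathrm{OPT}(\bfX)=0$ precisely when the given clustering is explainable, so the $(k-1)$-approximation returns the empty set in that case and a nonempty set otherwise. Correct, and the same approach as the paper (you additionally spell out the certificate, which is fine but not needed).
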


\begin{proof}[Proof of Lemma \ref{lem:cheapcut}]
 We probe all possible choices for cuts $(i,\theta)$ with  $i\in \{1,
\ldots,d\}$ and $\theta\in \coord_i(Y)$, and select one which incurs the minimum cost. We also select a subset $W$ of points to be removed w.r.t. each cut. The cost of such a cut is exactly the size of $W$. 

Fix a cut $(i,\theta)$. We have the following three cases. In the first case, for all clusters in $\calc(Y)$, strictly more than half of the points are contained in $L(i,\theta)$. In this case select a cluster $\bfC$ which has the minimum intersection with $L(i,\theta)$. Put all the points in $\bfC\cap L(i,\theta)$ into $W$. Also, for any other cluster $\bfC' \in \calc(Y)$, put the points in $\bfC'\cap R(i,\theta)$ into $W$. The second case is symmetric to the first one -- for all clusters in $\calc(Y)$, strictly more than half of the points are contained in $R(i,\theta)$. In this case we again select a cluster $\bfC$ which has the minimum intersection with $R(i,\theta)$. Put all the points in $\bfC\cap R(i,\theta)$ into $W$. Also, for any other cluster $\bfC' \in \calc(Y)$, put the points in $\bfC'\cap L(i,\theta)$ into $W$. In both of the above cases, the first three desired properties are satisfied for $\calc(Y)-W$. In the third case, for each cluster $\bfC \in \calc(Y)$, add the smaller part among $\bfC\cap L(i,\theta)$ and $\bfC\cap R(i,\theta)$ to $W$. In case $|\bfC\cap L(i,\theta)|=|\bfC\cap R(i,\theta)|$, we break the tie in a way so that properties (ii) and (iii) are satisfied. As $\calc(Y)$ contains at least two clusters this can always be done. Moreover, property (i) is trivially satisfied.  

In the above we showed that for all the choices of the cuts, it is possible to select $W$ so that the first three properties are satisfied. Let $w_m$ be the minimum size of the set $W$ over all cuts. As we select a cut for which the size of $W$ is minimized, it is sufficient to show that $w_m \le $ OPT$(Y)$. 

Let $k'$ be the number of clusters in $\calc(Y)$. Consider any optimal set $W^*$ for $Y$ such that $\calc(Y)-W^*$ is explainable. Let $(i^*,\theta^*)$ be the canonical cut corresponding to the root of the threshold tree corresponding to the explainable clustering $\calc(Y)-W^*$. Such a cut exists, as $\calc(Y)$ contains at least two clusters. Let $\widehat{W}$ be the set selected in our algorithm corresponding to the cut $(i^*,\theta^*)$. In the first of the above mentioned three cases, suppose $W^*$ does not contain the part $\bfC\cap L(i^*,\theta^*)$ fully for any of the $k'$ clusters $\bfC\in \calc(Y)$. In other words, $\calc(Y)-W^*$ contains points from each such part $\bfC\cap L(i^*,\theta^*)$. But, then even after choosing the root cut $(i^*,\theta^*)$ we still need $k'$ more cuts to separate the points in $(Y\setminus W^*)\cap L(i^*,\theta^*)$, which contains points from all the $k'$ clusters. However, by definition, the threshold tree must use only $k'$ cuts and hence we reach to a contradiction. Hence, $\bfC^*\cap L(i^*,\theta^*)$  must be fully contained in $W^*$ for some $\bfC^*\in \calc(Y)$. In this case,  our algorithm adds the points in $\bfC\cap L(i^*,\theta^*)$ to $\widehat{W}$ such that the size $|\bfC\cap L(i^*,\theta^*)|$ is minimized over all $\bfC\in \calc(Y)$ and for any other cluster $\bfC' \in \calc(Y)$, we put the points in $\bfC'\cap R(i^*,\theta^*)$ into $\widehat{W}$. Thus, $|\widehat{W}| \le |W^*|=$ OPT$(Y)$. The proof for the second case is the same as the one for the first case. We discuss the proof for the third case. Consider the  clusters $\bfC\in \calc(Y)$ such that both $\bfC\cap L(i^*,\theta^*)$ and $\bfC\cap R(i^*,\theta^*)$ are non-empty. Note that these are the only clusters whose points are put into $\widehat{W}$. But, then $W^*$ must contain all the points from at least one of the parts $\bfC\cap L(i^*,\theta^*)$ and $\bfC\cap R(i^*,\theta^*)$. For each such cluster $\bfC$, we add the smaller part among $\bfC\cap L(i,\theta)$ and $\bfC\cap R(i,\theta)$ to $\widehat{W}$. Hence, in this case also $|\widehat{W}| \le |W^*|=$ OPT$(Y)$. The lemma follows by noting that $w_m\le |\widehat{W}|$.        
\end{proof}


\subsection{Exact algorithm}




 
 Our $2^{2\min\{s,k\}}\cdot n^{2d}\cdot (dn)^{\Oh(1)}$  time algorithm is based on a novel dynamic programming scheme. Here, we briefly describe the algorithm. 
Our first observation is that each subproblem can be defined w.r.t. a bounding box in $\mathbb{R}^d$, as each cut used to split a point set in any threshold tree is an axis-parallel hyperplane. The number of such distinct bounding boxes is at most $n^{2d}$, as in each dimension a box is specified by two bounding values. This explains the $n^{2d}$ factor in the running time. Now, consider a fixed bounding box corresponding to a subproblem containing a number of given clusters, may be partially.    
If a new canonical cut splits a cluster, then one of the two resulting parts has to be removed, and this choice has to be passed on along the dynamic programming. As we remove at most $s$ points and the number of clusters is at most $k$, the number of such distinct choices can be bounded by $2^{2\min\{s,k\}}$. This roughly gives us the following theorem. 

\begin{theorem}\label{thm:DP-expl}
\probClustExpl  can be solved in $2^{2\min\{s,k\}}\cdot n^{2d}\cdot (dn)^{\Oh(1)}$  time.
\end{theorem}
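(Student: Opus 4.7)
My plan is to build a top-down dynamic program whose states are axis-aligned bounding boxes of $\mathbb{R}^d$ together with a compact ``signature'' telling us which clusters the subtree rooted at that box is responsible for. The key geometric observation is that in a canonical threshold tree, every node $v$ is naturally associated with a bounding box $B_v$ whose $2d$ facets lie either at a coordinate of an input point or at $\pm\infty$. There are only $\Oh(n^{2d})$ such boxes, which explains the $n^{2d}$ factor.

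For a box $B$ and a subset $S\subseteq\{1,\ldots,k\}$ of cluster labels, I would define $f(B,S)$ as the minimum number of points of $\bfX\cap B$ that must be deleted so that (i) all points of clusters outside $S$ lying in $B$ are removed, and (ii) the remaining points admit a canonical threshold tree with $|S|$ leaves, each of which contains only points of a single cluster $\bfC_j$ with $j\in S$ (distinct $j$'s for distinct leaves). The base case is $|S|=1$, $S=\{c\}$, where $f(B,\{c\})=|(\bfX\cap B)\setminus \bfC_c|$. For $|S|\ge 2$ the recurrence tries every canonical cut $(i,\theta)$ with $\theta\in\coord_i(\bfX\cap B)$, splitting $B$ into $B_1,B_2$, and every nonempty bipartition $S=S_1\uplus S_2$:
\[
f(B,S) \;=\; \min_{(i,\theta),\,(S_1,S_2)} \bigl(f(B_1,S_1)+f(B_2,S_2)\bigr).
\]
The answer to the instance is $f(B_0,\{1,\ldots,k\})$ where $B_0$ bounds $\bfX$; correctness follows by induction on $|S|$, since every canonical threshold tree with $\ge 2$ leaves has a root cut and a bipartition of its leaf-labels between its two children.

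The technical heart is bounding the number of \emph{relevant} pairs $(B,S)$ and the effective transition cost by $2^{2\min\{s,k\}}$ rather than the naive $2^{2k}$. The crucial observation is that in any state with $f(B,S)\le s$, every cluster $c$ whose points in $B$ are nonempty but that is excluded from $S$ forces the deletion of at least one point, and likewise every cluster that is split by the cut between $S_1$ and $S_2$ contributes at least one deletion. Hence the number of clusters for which there is any genuine freedom in choosing (a) their membership in $S$ and (b) which side of the current cut they are assigned to is at most $\min\{s,k\}$; the rest are either uniquely forced or disjoint from $B$. Retaining only the two status bits ``in $S$?'' and ``which side of the current cut?'' for each of these at most $\min\{s,k\}$ pivotal clusters yields the $4^{\min\{s,k\}}=2^{2\min\{s,k\}}$ bound. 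The main obstacle is to formalize this signature-reduction rigorously and to verify that optimal choices at a parent box are still expressible when combining the pruned signatures of its two children. Once this is done, multiplying $\Oh(n^{2d})$ boxes by $2^{2\min\{s,k\}}$ relevant signatures and the $(dn)^{\Oh(1)}$ cost of enumerating cuts and doing arithmetic yields the claimed running time.
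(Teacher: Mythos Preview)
Your proposal is correct and follows essentially the same approach as the paper. The paper formalizes your ``uniquely forced'' choices by restricting the DP to what it calls \emph{$(\bfa,\bfb]$-proper} subfamilies (clusters fully inside the box are forced into $S$, clusters fully outside are forced out) and to \emph{$s$-feasible} boxes (boxes that split at most $s$ clusters), which yields exactly your $2^{\min\{s,k\}}$ bound on relevant $S$'s and a second $2^{\min\{s,k\}}$ bound on relevant bipartitions at each cut; it also adds an explicit degenerate branch in the recurrence for the case where all but one cluster in $S$ are entirely deleted, but otherwise the DP, its correctness argument, and the $2^{2\min\{s,k\}}\cdot n^{2d}\cdot(dn)^{\Oh(1)}$ accounting coincide with your outline.
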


Before we move to the formal proof of the theorem, let us introduce some specific notations.
Let $\bfa,\bfb\in \mathbb{R}^d$ be such that $\bfa< \bfb$. We denote $(\bfa,\bfb]=\{\bfx\in\mathbb{R}^d\mid \bfa<\bfx\leq \bfb\}$ and call $(\bfa,\bfb]$ an \emph{interval}. For a collection of points $X\subseteq \mathbb{R}^d$, we say that $X$ is \emph{in} $(\bfa,\bfb]$ if $X\subseteq (\bfa,\bfb]$, $X$ is \emph{outside} $(\bfa,\bfb]$ if $X\cap (\bfa,\bfb]=\emptyset$,   
  and we say that $(\bfa,\bfb]$ \emph{splits} $X$ if $X\cap(\bfa,\bfb]\neq\emptyset$ and $X\setminus(\bfa,\bfb]\neq\emptyset$.  

Let $\mathcal{X}$ be a family of disjoint collections of points of $\mathbb{R}^d$.
A subfamily $\mathcal{Y}\subseteq\mathcal{X}$ is said to be \emph{$(\bfa,\bfb]$-proper} if 
(i) every $X\in\mathcal{X}$ that is in $(\bfa,\bfb]$ is in $\mathcal{Y}$, and (ii) 
every $X\in\mathcal{X}$ that is outside $(\bfa,\bfb]$ is not included in $\mathcal{Y}$. Note that $X\subseteq\mathcal{X}$ that are split by $(\bfa,\bfb]$ may be either in $\mathcal{Y}$ or not in $\mathcal{Y}$.
 The  \emph{truncation} of $\mathcal{X}$ with respect to $(\bfa,\bfb]$,
is the family  
$$\tr_{(\bfa,\bfb]}(\mathcal{X})=\{X\cap(\bfa,\bfb]\mid X\in\mathcal{X}\text{ s.t. }X\cap(\bfa,\bfb]\neq\emptyset\}.$$
%
 %
  For an integer $s\geq 0$, 
  $\mathcal{X}$ is \emph{$s$-feasible}  with respect to $(\bfa,\bfb]$ if $(\bfa,\bfb]$ splits at most $s$ collections in $\mathcal{X}$.  
 
   
  \begin{proof}[Proof of Theorem~\ref{thm:DP-expl}]
 Let $(\mathcal{C},s)$ be an instance of \probClustExpl, where $\mathcal{C}=\{\bfC_1,\ldots,\bfC_k\}$ for disjoint collections of points $\bfC_i$ of $\mathbf{R}^d$. Let $\bfX=\bigcup_{i=1}^k\bfC_i$.
 Following the proof of Proposition~\ref{prop:DP}, we say that a vector $\bfz\in(\mathbb{R}\cup\{\pm\infty\})^d$ is \emph{canonical}  if $\bfz[i]\in \coord_i(\bfX)\cup\{\pm\infty\}$ for every $i\in\{1,\ldots,d\}$. 
 
 For every pair of canonical vectors $(\bfa,\bfb)$ such that $\bfa<\bfb$ and $\mathcal{C}$ is $s$-feasible with respect to $(\bfa,\bfb]$, and every 
 $(\bfa,\bfb]$-proper $\mathcal{S}=\{\bfS_1,\ldots,\bfS_\ell\}\subseteq\mathcal{C}$, 
 we denote by $\omega(\bfa,\bfb,\mathcal{S})$ the minimum size of a collection of points $\bfW\subseteq \bfX\cap(\bfa,\bfb]$ such that $\{\bfS_1'\setminus \bfW,\ldots,\bfS_\ell'\setminus \bfW\}$, where $\{\bfS_1',\ldots,\bfS_\ell'\}=\tr_{(\bfa,\bfb]}(\mathcal{S})$,  is an explainable  $\ell$-clustering. We assume that $\omega(\bfa,\bfb,\mathcal{S})=0$ if $\mathcal{S}$ is empty.
 We compute 
 \begin{equation}\label{eq:def-w}
 w(\bfa,\bfb,\mathcal{S})=\omega(\bfa,\bfb,\mathcal{S}) +\sum_{\bfC_i\in \mathcal{S}}|\bfC_i\setminus (\bfa,\bfb]|+\sum_{\bfC_i\in\mathcal{C}\setminus\mathcal{S}}|\bfC_i\cap(\bfa,\bfb]|.
 \end{equation}
 Since we are interested only in clustering that can be obtained by deleting at most $s$ points, we assume that  $\omega(\bfa,\bfb,\mathcal{S})=+\infty$ if this value is bigger than $s$. This slightly informal agreement simplifies arguments.  In particular, observe the two sums in (\ref{eq:def-w}) give the value that is bigger than $s$ if $\mathcal{S}$ is not $s$-feasible with respect to $(\bfa,\bfb]$. In fact, this is the reason why these sums are included in (\ref{eq:def-w}).  
 
  Notice that $(\mathcal{C},s)$ is a yes-instance of \probClustExpl if and only if $w(\bfa^*,\bfb^*,\mathcal{C})\leq s$, where $\bfa^*[i]=-\infty$ and $\bfb^*[i]=+\infty$ for $i\in\{1,\ldots,d\}$. 
 
 The values $w(\bfa,\bfb,\mathcal{S})$ are computed depending on $\ell=|\mathcal{S}|$.  If $\ell=0$, that is, $\mathcal{S}=\emptyset$, then $\omega(\bfa,\bfb,\mathcal{S})=0$ and
 $w(\bfa,\bfb,\mathcal{S})=\sum\limits_{\bfC_j\in\mathcal{C}}|\bfC_j\cap(\bfa,\bfb]|$. 
 If $\ell =1$, then $\omega(\bfa,\bfb,\mathcal{S})=0$ by definition. Then  $\mathcal{S}=\{\bfC_i\}$ for some $i\in\{1,\ldots,k\}$ such that $\bfC_i\cap(\bfa,\bfb]\neq\emptyset$ and 
 $$w(\bfa,\bfb,\mathcal{S})=|\bfC_i\setminus (\bfa,\bfb]|+\sum_{\bfC_j\in\mathcal{C}\setminus\{\bfC_i\}}|\bfC_j\cap(\bfa,\bfb]|.$$
 
 Assume that $\ell\geq 2$, and the  values of   $\omega(\bfa',\bfb',\mathcal{S}')$ 
 are computed for $|\mathcal{S}'|<\ell$. 
 
 For $i\in \{1,\ldots,d\}$ and $\theta\in \coord_i(\bfX)$ such that $\bfa[i]< \theta<\bfb[i]$, we define the vectors $\bfa^{i,\theta}$ and $\bfb^{i,\theta}$ by setting
 $$\bfa^{i,\theta}[j]=
 \begin{cases}
 \theta&\mbox{if }j=i,\\
 \bfa[j]&\mbox{if }j\neq i,
 \end{cases}
 \text{ and }
\bfb^{i,\theta}[j]=
 \begin{cases}
 \theta&\mbox{if }j=i,\\
 \bfb[j]&\mbox{if }j\neq i.
 \end{cases}
$$
 We also say that $(i,\theta)$ is \emph{$s$-feasible} if $\mathcal{C}$ is $s$-feasible with respect to $(\bfa,\bfb^{i,\theta}]$ and $(\bfa^{i,\theta},
 \bfb]$.
For an $s$-feasible $(i,\theta)$, a partition  $(\mathcal{S}_1,\mathcal{S}_2)$ of $\mathcal{S}$ is \emph{$(i,\theta)$-proper} if $\mathcal{S}_1$ and $\mathcal{S}_2$ are  $(\bfa,\bfb^{i,\theta}]$ and $(\bfa^{i,\theta},
 \bfb]$-proper, respectively. We define 
 \begin{equation*}  
 \delta_{i,\theta}(\mathcal{S})=\sum_{\substack{\bfC_i\in\mathcal{S}\colon \bfC_i\cap(\bfa,\bfb^{i,\theta}]\neq\emptyset\\\text{and }\bfC_i\cap(\bfa^{i,\theta},\bfb]\neq\emptyset}} |\bfC_i\cap(\bfa,\bfb]|. 
 \end{equation*}
 

We compute $\omega(\bfa,\bfb,\mathcal{S})$ by the following recurrence.
\begin{align}\label{eq:rec-dp-expl}
w(\bfa&,\bfb,\mathcal{S})= \min\{(**) + (***)\},
\end{align}
where the right part is denoted by $(*)$, and
\begin{equation*}
(**)=\min\{|\bfC_i\setminus(\bfa,\bfb]|+\sum_{\bfC_j\in\mathcal{S}\setminus\{C_i\}}|\bfC_j|+\sum_{\bfC_j\in \mathcal{C}\setminus\mathcal{S} }|\bfC_j\cap(\bfa,\bfb]|\mid\bfC_i\in\mathcal{S} \},
\end{equation*}
 \begin{multline*}
(***)=\min \{w(\bfa,\bfb^{i,\theta},\mathcal{S}_1)+w(\bfa^{i,\theta},\bfb,\mathcal{S}_2)-\delta_{i,\theta}(\mathcal{S}) \text{ for } 1\leq i\leq d,~\theta\in \coord_i(\bfX), \nonumber\\
(\mathcal{S}_1,\mathcal{S}_2)\text{ is  partition of }\mathcal{S}\text{ s.t.}, \bfa[i]< \theta<\bfb[i],(i,\theta)\text{ is }s\text{-feasible}, (\mathcal{S}_1,\mathcal{S}_2)\text{ is  }(i,\theta)\text{-proper}\}.
\end{multline*} 
We assume that $(***)=+\infty$ if there is no triple $(i,\theta,(\mathcal{S}_1,\mathcal{S}_2))$ satisfying the conditions in the definition of the set.
We also assume that $(*)=+\infty$
if its value proves to be bigger than $s$.

 The correctness of (\ref{eq:rec-dp-expl}) is proved by  showing the inequalities between the left and rights parts in both directions.  

First, we show that $w(\bfa,\bfb,\mathcal{S})\geq (*)$. This is trivial  if $w(\bfa,\bfb,\mathcal{S})=+\infty$. Assume that this is not the case. Then by our assumption, $w(\bfa,\bfb,\mathcal{S})\leq s$. 
Recall that 
$w(\bfa,\bfb,\mathcal{S})=\omega(\bfa,\bfb,\mathcal{S})+\sum_{\bfC_i\in \mathcal{S}}|\bfC_i\setminus (\bfa,\bfb]|+\sum_{\bfC_i\in\mathcal{C}\setminus\mathcal{S}}|\bfC_i\cap(\bfa,\bfb]|$. 
Let $r=\omega(\bfa,\bfb,\mathcal{S})$ and let $\bfW\subseteq \bfX\cap(\bfa,\bfb]$ be a collection of  $r$ points such that $\mathcal{S}'=\{\bfS_1'\setminus \bfW,\ldots,\bfS_\ell'\setminus \bfW\}$, where $\{\bfS_1',\ldots,\bfS_\ell'\}=\tr_{(\bfa,\bfb]}(\mathcal{S})$,  is an explainable  $\ell$-clustering. Assume that $\mathcal{S}=\{\bfC_{i_1},\ldots,\bfC_{i_\ell}\}$ and $\bfS_j'=\bfC_{i_j}\cap (\bfa,\bfb]$.
 Let $\bfW_i=\bfW\cap \bfS_i'$ for $i\in\{1,\ldots,\ell\}$. 

Notice that it may happen that $\bfW_j=\bfS_j'$ for some $j\in\{1,\ldots,\ell\}$. Then $\bfC_{i_j}\cap(\bfa,\bfb]=\bfW_j$.  
Observe, however, that $\bfC_{i_j}\cap(\bfa,\bfb]=\bfW_j$ for at most $\ell-1$ values of $j$. Suppose that there is $h\in \{1,\ldots,\ell\}$
such that $\bfC_{i_h}\cap(\bfa,\bfb]\neq \bfW_h$ and $\bfC_{i_j}\cap(\bfa,\bfb]=\bfW_j$ for every $h\in\{1,\ldots,\ell\}$ such that $j\neq h$.
In this case, we obtain that 
\begin{equation*}
\omega(\bfa,\bfb,\mathcal{S})=\sum_{\bfC_j\in\mathcal{S}\setminus\{C_{i_h}\}}|\bfC_j\cap(\bfa,\bfb]|
\end{equation*}
and 
\begin{equation*}
w(\bfa,\bfb,\mathcal{S})=|\bfC_{i_h}\setminus(\bfa,\bfb]| + \sum_{\bfC_j\in\mathcal{S}\setminus\bfC_{i_h}}|\bfC_j|+\sum_{\bfC_j\in \mathcal{C}\setminus\mathcal{S} }|\bfC_j\cap(\bfa,\bfb]|.
\end{equation*}
Then 
$w(\bfa,\bfb,\mathcal{S})\geq (**)\geq (*)$.
 Assume from now that this is not the case and $\bfS_j\setminus \bfW_j\neq\emptyset$ for at least two distinct indices $j\in\{1,\ldots,\ell\}$.
Then we show that $w(\bfa,\bfb,\mathcal{S})\geq (***)$.

Because we separate at least two nonempty collections of points, the definition of explainable clustering implies that 
there are $i\in\{1,\ldots,d\}$ and $\theta\in\coord_i(\bfX)$, such that $\bfa[i]<\theta<\bfb[i]$ and 
there is a partition $(I_1,I_2)$ of $\{1,\ldots,\ell\}$ with the property that (i) $\hat{\mathcal{S}}_1=\{\bfS_i'\setminus \bfW_j\mid j\in I_1 \}$ is an explainable $\ell_1=|I_1|$-clustering with the clusters in $(\bfa,\bfb^{i,\theta}]$, and 
(ii) $\hat{\mathcal{S}}_2=\{\bfS_j'\setminus \bfW_j\mid j\in I_2 \}$ is an explainable $\ell_2=|I_2|$-clustering with the clusters in $(\bfa^{i,\theta},\bfb]$, where both $\hat{\mathcal{S}}_1$ and $\hat{\mathcal{S}}_2$ contain nonempty collections of points. 
Moreover, we assume that if  $\bfS_j\setminus \bfW_j=\emptyset$ for some $j\in\{1,\ldots,\ell\}$, then $\bfS_j\setminus \bfW_j$ is placed in $\hat{\mathcal{S}}_1$ if $\bfS_j$ has a point in $(\bfa,\bfb^{i,\theta}]$ and, otherwise, i.e. if  $\bfS_j$ has only points in $(\bfa^{i,\theta},\bfb]$, it is placed in  $\hat{\mathcal{S}}_2$.

We define $\mathcal{S}_1=\{\bfC_{i_j}\mid j\in I_1\}$ and $\mathcal{S}_2=\{\bfC_{i_j}\mid j\in I_2\}$. For $j\in I_1$, let $\bfW_j^1=(\bfa,\bfb^{i,\theta}]\cap \bfW_j$, and let   $\bfW_j^2=(\bfa^{i,\theta},\bfb]\cap \bfW_j$. 
We set $\bfW^1=\bigcup_{j\in I_1}\bfW_j^1$ and  $\bfW^2= \bigcup_{j\in I_2}\bfW_j^2$. Let also 
$\bfR_1=(\bfa^{i,\theta},\bfb]\cap\big( \bigcup_{j\in I_1}\bfW_j\big)$ and  $\bfR_2=(\bfa,\bfb^{i,\theta}]\cap\big( \bigcup_{j\in I_2}\bfW_j\big)$. Observe that $(\bfW^1,\bfR_1,\bfW^2,\bfR_2)$ is a partition of $\bfW$ where some sets may be empty. 
Denote by $w_1=|\bfW^1|$ and $w_2=|\bfW^2|$, and let $r_1=|\bfR_1|$ and $r_2=|\bfR_2|$. Clearly, $w_1+w_2+r_1+r_2=r$.

Notice that $\mathcal{S}_1'=\{\bfS_j'\setminus \bfR_1\mid j\in I_1\}=\tr_{(\bfa,\bfb^{i,\theta}]}(\mathcal{S}_1)$ and $\mathcal{S}_2'=\{\bfS_j'\setminus \bfR_2\mid j\in I_2\}=\tr_{(\bfa^{i,\theta},\bfb]}(\mathcal{S}_2)$. 
Also $\mathcal{S}_1$ and $\mathcal{S}_2$ are $(\bfa,\bfb^{i,\theta}]$ and $(\bfa^{i,\theta},\bfb]$-proper, respectively. Furthermore, for $h\in\{1,2\}$,
$\hat{\mathcal{S}}_h$ is obtained from $\mathcal{S}_h'$ by deleting the points of $\bfW^1$ from the clusters. 
Also we have that 
\begin{equation}\label{eq:in-one}
\sum_{\bfC_j\in \mathcal{S}_1}|\bfC_j\setminus(\bfa,\bfb^{i,\theta}]|=\sum_{\bfC_j\in \mathcal{S}_1}|\bfC_j\setminus(\bfa,\bfb]|+|\bfR_1|,
\end{equation}
\begin{equation}\label{eq:in-two}
\sum_{\bfC_j\in \mathcal{S}_2}|\bfC_j\setminus(\bfa^{i,\theta},\bfb]|=\sum_{\bfC_j\in \mathcal{S}_2}|\bfC_j\setminus(\bfa,\bfb]|+|\bfR_2|,
\end{equation}
and
\begin{equation}\label{eq:out}
\sum_{\bfC_j\in \mathcal{C}\setminus \mathcal{S}_1}|\bfC_j\cap(\bfa,\bfb^{i,\theta}]|+\sum_{\bfC_j\in \mathcal{C}\setminus \mathcal{S}_2}|\bfC_j\cap(\bfa^{i,\theta},\bfb]|=\sum_{\bfC_j\in \mathcal{C}\setminus\mathcal{S}}|\bfC_j\cap(\bfa,\bfb]|+|\bfR_1|+|\bfR_2|.
\end{equation}
Note also that 
\begin{equation}\label{eq:delta}
\delta_{i,\theta}(\mathcal{S})=|\bfR_1|+|\bfR_2|.
\end{equation}
Then by (\ref{eq:in-one})--(\ref{eq:delta}),
\begin{align}\label{eq:forward}
w(\bfa,\bfb,\mathcal{S})=&(w_1+w_2+r_1+r_2) +\sum_{\bfC_j\in \mathcal{S}}|\bfC_j\setminus (\bfa,\bfb]|+\sum_{\bfC_j\in\mathcal{C}\setminus\mathcal{S}}|\bfC_j\cap(\bfa,\bfb]|    \nonumber\\
=&(w_1+w_2+r_1+r_2)+\sum_{\bfC_j\in \mathcal{S}_1}|\bfC_j\setminus (\bfa,\bfb]|+\sum_{\bfC_j\in \mathcal{S}_2}|\bfC_j\setminus (\bfa,\bfb]|\nonumber\\
 & +\sum_{\bfC_j\in \mathcal{C}\setminus \mathcal{S}_1}|\bfC_j\cap(\bfa,\bfb^{i,\theta}]|  +\sum_{\bfC_j\in \mathcal{C}\setminus \mathcal{S}_2}|\bfC_j\cap(\bfa^{i,\theta},\bfb]|-2r_1-2r_2 \nonumber\\
=& w_1+\sum_{\bfC_j\in \mathcal{S}_1}|\bfC_j\setminus (\bfa,\bfb^{i,\theta}]| +\sum_{\bfC_j\in \mathcal{C}\setminus \mathcal{S}_1}|\bfC_j\cap(\bfa,\bfb^{i,\theta}]|\nonumber\\
&+w_2+\sum_{\bfC_j\in \mathcal{S}_2}|\bfC_j\setminus (\bfa^{i,\theta},\bfb]| +\sum_{\bfC_j\in \mathcal{C}\setminus \mathcal{S}_2}|\bfC_j\cap(\bfa^{i,\theta},\bfb]|-\delta_{i,\theta}(\mathcal{S}).
\end{align}
Recall that $w(\bfa,\bfb,\mathcal{S})\leq s$. 
Note that 
\begin{equation*}\label{eq:feas}
\sum_{\bfC_j\in \mathcal{C}\setminus \mathcal{S}_1}|\bfC_j\cap(\bfa,\bfb^{i,\theta}]|\leq \sum_{\bfC_j\in \mathcal{C}\setminus\mathcal{S}}|\bfC_j\cap(\bfa,\bfb]|+|\bfR_2|.
\end{equation*}
Using (\ref{eq:in-one}), we obtain that 
\begin{equation*} 
\sum_{\bfC_j\in \mathcal{S}_1}|\bfC_j\setminus (\bfa,\bfb^{i,\theta}]|+\sum_{\bfC_j\in \mathcal{C}\setminus \mathcal{S}_1}|\bfC_j\cap(\bfa,\bfb^{i,\theta}]|\leq  w(\bfa,\bfb,\mathcal{S}),
\end{equation*}
which is at most $s$.
This means that $\mathcal{S}_1$ is $(\bfa,\bfb^{i,\theta}]$-proper. Similarly, we have that 
$\mathcal{S}_2$ is  $(\bfa^{i,\theta},\bfb]$-proper.
Therefore, 
\begin{equation*}
w_1+\sum_{\bfC_j\in \mathcal{S}_1}|\bfC_j\setminus (\bfa,\bfb^{i,\theta}]|+\sum_{\bfC_j\in \mathcal{C}\setminus \mathcal{S}_1}|\bfC_j\cap(\bfa,\bfb^{i,\theta}]|\geq w(\bfa,\bfb^{i,\theta},\mathcal{S}_1)
\end{equation*}
and 
\begin{equation*}
w_2+\sum_{\bfC_j\in \mathcal{S}_2}|\bfC_j\setminus (\bfa^{i,\theta},\bfb]|+\sum_{\bfC_j\in \mathcal{C}\setminus \mathcal{S}_2}|\bfC_j\cap(\bfa^{i,\theta},\bfb]|\geq w(\bfa^{i,\theta},\bfb,\mathcal{S}_2).
\end{equation*}

This allows us to extend (\ref{eq:forward}) and conclude that 
\begin{equation*}
w(\bfa,\bfb,\mathcal{S})\geq w(\bfa,\bfb^{i,\theta},\mathcal{S}_1)+w(\bfa^{i,\theta},\bfb,\mathcal{S}_2)\geq (***).
\end{equation*}
This shows that $w(\bfa,\bfb,\mathcal{S})\geq(***)\geq(*)$ and
concludes the proof of the first inequality.

Now we show that $w(\bfa,\bfb,\mathcal{S})\leq (*)$. The inequality is trivial if $(*)=+\infty$.  Suppose that this is not the case. Then, by our assumption about assigning the value to $(*)$, $(*)\leq s$. 

Suppose that the minimum in $(*)$ is achieved in the first part, that is, $(**)\leq (***)$.
Then 
\begin{equation*}
(*)=|\bfC_i\setminus(\bfa,\bfb]|+\sum_{\bfC_j\in\mathcal{S}\setminus\{C_i\}}|\bfC_j|\nonumber+\sum_{\bfC_j\in \mathcal{C}\setminus\mathcal{S} }|\bfC_j\cap(\bfa,\bfb]|
\end{equation*} 
for some $\bfC_i\in\mathcal{S}$.
We define $\bfW=\bigcup\limits_{\bfC_j\in \mathcal{S}\setminus \{\bfC_\}}|\bfC_j\cap (\bfa,\bfb]|$.
We have that the family obtained from $tr_{(\bfa,\bfb]}(\mathcal{S})$ by the deletion of the points of $\bfW$ is an explainable $\ell$-clustering, because we deleted the points in $(\bfa,\bfb]$ of every $\bfC_j\in\mathcal{S}$ excepts $\bfC_i$.
This implies that $\omega(\bfa,\bfb,\mathcal{S})\leq|\bfW|$ and we obtain that 
\begin{multline*} 
w(\bfa,\bfb,\mathcal{S})\leq|\bfW|+\sum_{\bfC_i\in \mathcal{S}}|\bfC_i\setminus (\bfa,\bfb]|+\sum_{\bfC_i\in\mathcal{C}\setminus\mathcal{S}}|\bfC_i\cap(\bfa,\bfb]|\\
=|\bfC_i\setminus(\bfa,\bfb]|+\sum_{\bfC_j\in\mathcal{S}\setminus\{C_i\}}|\bfC_j|\nonumber+\sum_{\bfC_j\in \mathcal{C}\setminus\mathcal{S} }|\bfC_j\cap(\bfa,\bfb]|,
\end{multline*}
which is exactly $(**)$, so $\omega(\bfa,\bfb,\mathcal{S})\leq (*)$.

Assume from now that the minimum in $(*)$ is achieved for the second part, that is, $(*)=(***)<(**)$.
Suppose that 
$i\in\{1,\ldots,d\}$, $\theta\in \coord_i(\bfX)$ where $\bfa[i]\leq \theta\leq \bfb[i]$ and $(i,\theta)$ is $s$-feasible, and a partition $(\mathcal{S}_1,\mathcal{S}_2)$ of $\mathcal{S}$ that is $(i,\theta)$-proper are chosen in such a way that $(**)$ achieves the minimum value for them, that is, $(***)=w(\bfa,\bfb^{i,\theta},\mathcal{S}_1)+w(\bfa^{i,\theta},\bfb,\mathcal{S}_2)-\delta_{i,\theta}(\mathcal{S})$.  
  
Let $\bfR_1=(\bfa^{i,\theta},\bfb]\cap\big(\bigcup_{\bfC_j\in\mathcal{S}_1}\bfC_j)$ and  $\bfR_2=(\bfa,\bfb^{i,\theta}]\cap\big(\bigcup_{\bfC_j\in\mathcal{S}_2}\bfC_j)$. Then we obtain that 
\begin{equation}\label{eq:in-one-b}
\sum_{\bfC_j\in \mathcal{S}_1}|\bfC_j\setminus(\bfa,\bfb^{i,\theta}]|=\sum_{\bfC_j\in \mathcal{S}_1}|\bfC_j\setminus(\bfa,\bfb]|+|\bfR_1|,
\end{equation}
\begin{equation}\label{eq:in-two-b}
\sum_{\bfC_j\in \mathcal{S}_2}|\bfC_j\setminus(\bfa^{i,\theta},\bfb]|=\sum_{\bfC_j\in \mathcal{S}_2}|\bfC_j\setminus(\bfa,\bfb]|+|\bfR_2|,
\end{equation}
\begin{equation}\label{eq:out-b}
\sum_{\bfC_j\in \mathcal{C}\setminus \mathcal{S}_1}|\bfC_j\cap(\bfa,\bfb^{i,\theta}]|+\sum_{\bfC_j\in \mathcal{C}\setminus \mathcal{S}_2}|\bfC_j\cap(\bfa^{i,\theta},\bfb]|
=\sum_{\bfC_j\in \mathcal{C}\setminus\mathcal{S}}|\bfC_j\cap(\bfa,\bfb]|+|\bfR_1|+|\bfR_2|,
\end{equation}
and 
\begin{equation}\label{eq:delta-b}
\delta_{i,\theta}(\mathcal{S})=|\bfR_1|+|\bfR_2|.
\end{equation}

Let $w_1=\omega(\bfa,\bfb^{i,\theta},\mathcal{S}_1)$ and $w_2=\omega(\bfa^{i,\theta},\bfb,\mathcal{S}_2)$. Then there is a collection $\bfW_1\subseteq (\bfa,\bfb^{i,\theta}]\cap\bfX$ such that  the collection of sets obtained from the collections of $\tr_{(\bfa,\bfb^{i,\theta}]}(\mathcal{S}_1)$ by the deletions of the points of $\bfW_1$ is an explainable $|\mathcal{S}_1|$-clustering. Similarly, there is a collection $\bfW_2\subseteq (\bfa^{i,\theta},\bfb]\cap\bfX$ such that  the family of points collections obtained from the collections of $\tr_{(\bfa^{i,\theta},\bfb]}(\mathcal{S}_2)$ by the deletions the points of $\bfW_2$ is an explainable $|\mathcal{S}_2|$-clustering. Consider $\bfW=\bfW_1\cup \bfW_2\cup \bfR_1\cup \bfR_2$. The crucial observation is that the  family  obtained from the collections of $\tr_{(\bfa,\bfb]}(\mathcal{S})$ by the deletions of the points of $\bfW$ is an explainable $\ell$-clustering, where the first cut is $\cut_{i,\theta}$. Using (\ref{eq:in-one-b})--(\ref{eq:delta-b}), we obtain that
\begin{multline*}
(***)=w(\bfa,\bfb^{i,\theta},\mathcal{S}_1)+w(\bfa^{i,\theta},\bfb,\mathcal{S}_2)-\delta_{i,\theta}(\mathcal{S})
=|\bfW|+\sum_{\bfC_i\in \mathcal{S}}|\bfC_i\setminus (\bfa,\bfb]|+\sum_{\bfC_i\in\mathcal{C}\setminus\mathcal{S}}|\bfC_i\cap(\bfa,\bfb]|\\
\geq \omega(\bfa,\bfb,\mathcal{S})+\sum_{\bfC_i\in \mathcal{S}}|\bfC_i\setminus (\bfa,\bfb]|+\sum_{\bfC_i\in\mathcal{C}\setminus\mathcal{S}}|\bfC_i\cap(\bfa,\bfb]|
=w(\bfa,\bfb,\mathcal{S}).
\end{multline*} 
 Since $(*)=(***)$, this completes the correctness proof of the recurrence (\ref{eq:rec-dp-expl}).
 
 In the final stage of the proof, we evaluate the running time. We construct the table of values of $w(\bfa,\bfb,\mathcal{S})$ for pairs $(\bfa,\bfb)$ of canonical vectors such that $\bfa< \bfb$. The total number of such pairs is at most $(n+2)^{2d}$ and they can be constructed in $n^{2d}\cdot (dn)^{\Oh(1)}$ time. We are interested only in $\bfa$ and $\bfb$ such that $\mathcal{C}$ is $s$-feasible with respect to $(\bfa,\bfb]$. Clearly, for given $\bfa$ and $\bfb$, the  $s$-feasibility can be checked in $(dn)^{\Oh(1)}$ time. If $\mathcal{C}$ is $s$-feasible with respect to $(\bfa,\bfb]$, then all $(\bfa,\bfb]$-proper subfamilies $\mathcal{S}$ of $\mathcal{C}$ can be listed by brute force as follows. Observe that 
 $\mathcal{X}=\{\bfC_i\in\mathcal{C}\mid \bfC_i\text{ is in }(\bfa,\bfb]\}$ that can be constructed in polynomial time is a subfamily of every $(\bfa,\bfb]$-proper $\mathcal{S}$. Let $\mathcal{Y}=\{\bfC_i\in \mathcal{C}\mid \bfC_i\text{ is split by }(\bfa,\bfb]\}$.
 Since  $\mathcal{C}$ is $s$-feasible, $|\mathcal{Y}|\leq\min\{s,k\}$. We can construct $\mathcal{Y}$ in polynomial time and then generate at most $2^{\min\{s,k\}}$ subfamilies $\mathcal{Z}$ of $\mathcal{Y}$ in total 
 $2^{\min\{s,k\}}\cdot (dn)^{\Oh(1)}$ time. Then the $(\bfa,\bfb]$-proper subfamilies $\mathcal{S}$ are exactly the families of the form $\mathcal{Z}\cup\mathcal{Y}$.  We obtain that there are at most $2^{\min\{s,k\}}$ the 
 $(\bfa,\bfb]$-proper subfamilies that can be generated in time $2^{\min\{s,k\}}\cdot (dn)^{\Oh(1)}$ time. Then we conclude that the dynamic programming algorithm computes at most $2^{\min\{s,k\}}\cdot n^{2d}$ values of $w(\bfa,\bfb,\mathcal{S})$. 
  
The  value of $w(\bfa,\bfb,\mathcal{S})$ for $|\mathcal{S}|$ is constructed in $(dn)^{\Oh(1)}$ time  if $\ell\leq1$. If $\ell\geq 2$, we are using the recurrence (\ref{eq:rec-dp-expl}). 
Computing  $(**)$ 
 can be done in polynomial time.
To compute $(***)$, we  go through $i\in\{1,\ldots,d\}$, $\theta\in\coord_i(\bfX)$, and partitions $(\mathcal{S}_1,\mathcal{S})$ of $\mathcal{S}$, where $(\mathcal{S}_1,\mathcal{S}_2)$ is required to be $(i,\theta)$-proper. This implies that   $(***)$ can be computed in $2^{\min\{s,k\}}\cdot (dn)^{\Oh(1)}$ time. Summarizing, we have that the total running time of the dynamic programming algorithm is    $2^{2\min\{s,k\}}\cdot n^{2d}\cdot (dn)^{\Oh(1)}$. This concludes the proof.    
 \end{proof} 
 
\subsection{Data reduction}\label{sec:kernel}

\begin{restatable}{theorem}{kernel}\label{thm:kern}
    Let $r = 2 (s + 1) dk$. There is a polynomial-time algorithm that, given an instance of \probClustExpl, produces an equivalent one with at most $r$ points in $\{1,\ldots, r\}^d$. 
\end{restatable}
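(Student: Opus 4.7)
The plan is to reduce in two stages: first cut down the number of points, then compress the coordinate range. For the first stage, I propose to keep, for each cluster $\bfC_i$ and each coordinate $j\in\{1,\ldots,d\}$, the (up to) $s+1$ points of $\bfC_i$ with smallest $j$-th coordinate and the (up to) $s+1$ points with largest $j$-th coordinate, discarding everything else. Let $\bfX'$ be the resulting point set and $\bfC_i' = \bfC_i\cap \bfX'$; clearly $|\bfX'|\leq 2(s+1)dk = r$. I claim that $(\{\bfC_1',\ldots,\bfC_k'\},s)$ is equivalent to $(\{\bfC_1,\ldots,\bfC_k\},s)$ as instances of \probClustExpl.

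The forward direction is essentially immediate: if $(T,\bfW)$ witnesses that the original instance is a yes-instance, then $\bfW':=\bfW\cap\bfX'$ satisfies $|\bfW'|\leq s$, and the very same cuts of $T$ route every surviving point of $\bfC_i'\setminus\bfW'$ to the leaf labeled $\bfC_i$, since those points were already routed there in the original tree.

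The main work is the backward direction. Suppose $(T',\bfW')$ witnesses that the reduced instance is yes; I will show it also witnesses the original. Fix a cluster $\bfC_i$ and any cut $(j,\theta)$ on the root-to-$\bfC_i$ path in $T'$, and assume without loss of generality that the $\bfC_i$-leaf lies to the right of this cut. Since all of $\bfC_i'\setminus\bfW'$ must lie on the right and $|\bfW'|\leq s$, at most $s$ points of $\bfC_i'$ may have $j$-th coordinate $\leq \theta$. But $\bfC_i'$ contains the $s+1$ points of $\bfC_i$ with smallest $j$-th coordinate, so $\theta$ is strictly less than the $(s+1)$-th smallest $j$-coordinate of $\bfC_i$. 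Therefore every non-extreme (``interior'') point $\bfp$ of $\bfC_i$ has $\bfp[j]$ at least this threshold, hence $\bfp[j]>\theta$, so $\bfp$ is routed to the right exactly like the $\bfC_i$-leaf. The symmetric argument handles the case when the leaf is to the left. Iterating over all cuts on $\bfC_i$'s path shows that every point of $\bfC_i\setminus \bfW'$, including the discarded interior ones, reaches leaf $\bfC_i$ in $T'$ applied to the full set $\bfX$; thus $(T',\bfW')$ is a valid witness for the original instance.

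For the second stage, apply standard rank compression: for each coordinate $j$, sort the distinct $j$-coordinates appearing in $\bfX'$ and replace each value by its rank in $\{1,\ldots,t_j\}$ with $t_j\leq|\bfX'|\leq r$. Since, as recalled in the preliminaries, only canonical cuts matter for explainable clusterings, and rank compression is an order-preserving bijection on canonical thresholds, this transformation preserves the set of explainable clusterings and hence yes-/no-answers. The resulting instance has at most $r$ points with coordinates in $\{1,\ldots,r\}$, as required. The subtlest step is the ``interior-point-follows-the-leaf'' argument in the presence of ties within a coordinate of $\bfC_i$; I will handle this by breaking ties by point index when defining the ``$(s+1)$-th smallest'' so that the inequality $\theta<\bfp[j]$ still holds for every interior $\bfp$.
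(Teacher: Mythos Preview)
Your proposal is correct and follows essentially the same approach as the paper: the same extremal-marking reduction (keep the $s+1$ smallest and $s+1$ largest points per cluster per coordinate), the same key observation that any unmarked point is sandwiched between $s+1$ marked extremes on each side and must therefore follow the surviving extremes through every cut after at most $s$ deletions, and the same rank compression of coordinates. The paper phrases the backward direction as a proof by contradiction at a single splitting node, while you argue directly along the root-to-leaf path, but the content is identical.
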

  
\begin{proof}
Let $(\mathcal{C},s)$ be an instance of \probClustExpl, where $\mathcal{C}=\{\bfC_1,\ldots,\bfC_k\}$ for disjoint collections of points $\bfC_i$ of $\mathbf{R}^d$. Let $\bfX=\bigcup_{i=1}^k\bfC_i$.
  
Our first aim is to reduce the number of points. For this, we use a procedure that marks essential points. 
  
For every $i\in\{1,\ldots,k\}$ and every $j\in\{1,\ldots,d\}$, do the following:
\begin{itemize}
\item Order the points of $\bfC_i$ by the increase of their $j$-th coordinate; the  ties are broken arbitrarily.
\item \emph{Mark} the first $\min\{s+1,|\bfC_i|\}$ points and the last   $\min\{s+1,|\bfC_i|\}$ points in the ordering.
\end{itemize}
  
The procedure marks at most $2(s+1)dk$ points. Then we delete the remaining unmarked points. Formally, we denote by $\bfY$ the collection of marked points and set $\bfS_i=\bfC_i\cap\bfY$ for all $i\in\{1,\ldots,k\}$.   
Then we consider the instance $(\mathcal{S},s)$ of  \probClustExpl, where $\mathcal{S}=\{\bfS_1,\ldots,\bfS_k\}$. We show the following claim.

\begin{claim}\label{cl:expl}
$(\mathcal{C},s)$ is a yes-instance of \probClustExpl if and only if $(\mathcal{S},s)$ is a yes-instance.
\end{claim}

\begin{proof}[Proof of Claim~\ref{cl:expl}]
Trivially, if   $(\mathcal{C},s)$ is a yes-instance, then $(\mathcal{S},s)$ is a yes-instance, because we just deleted some point to construct $(\mathcal{S},s)$. We show that  
if $(\mathcal{S},s)$ is a yes-instance, then $(\mathcal{C},s)$ is a yes-instance. 

Because $(\mathcal{S},s)$ is a yes-instance, there is a collection of at most $s$ points $\bfW\subseteq \bfY$ such that $\{\bfS_1\setminus \bfW,\ldots,\bfS_k\setminus \bfW\}$ is an explainable $k$-clustering. In other words, there is an explainable clustering of $\bfY\setminus \bfW$ with a canonical threshold tree $(T,k,\varphi)$ such that the clusters $\bfS_1\setminus \bfW,\ldots,\bfS_k\setminus \bfW$ correspond to the leaves of the threshold tree. We claim that if we use the same threshold tree for $\bfX\setminus \bfW$, then $\bfC_1\setminus \bfW,\ldots,\bfC_k\setminus \bfW$ correspond to the leaves. 

The proof is by contradiction. Assume that at least one collections of points corresponding to a leaf is distinct from every $\bfC_1\setminus \bfW,\ldots,\bfC_k\setminus \bfW$. Then there is a node $v\in V(T)$ such that for some $j\in\{1,\ldots,k\}$, $\bfC_j\setminus \bfW$ is split by the cut $\cut_{i,\theta}$ for $(i,\theta)=\varphi(v)$, that is, for $(\bfA,\bfB)=\cut_{i,\theta}(\bfX)$, $\bfA\cap (\bfC_j\setminus \bfW)\neq\emptyset$ and $\bfB\cap (\bfC_j\setminus \bfW)\neq\emptyset$. Observe that either $\bfA\cap (\bfS_j\setminus \bfW)=\emptyset$ or $\bfB\cap (\bfS_j\setminus \bfW)=\emptyset$. We assume without loss of generality that $\bfA\cap (\bfS_j\setminus \bfW)=\emptyset$ (the other case is symmetric). This means that there is an unmarked point $\bfx\in \bfC_j\setminus \bfW$ in $\bfA$ and all the marked points of $\bfC_j\setminus \bfW$ are in $\bfB$. Because $\bfC_j$ has an unmarked point, $|\bfC_j|\geq 2(s+1)+1$.
Following the marking procedure, we order the points of  $\bfC_j$ by the increase of the $i$-th coordinate breaking ties exactly as in the marking procedure. Let $L$ be the collection of the first $s+1$ points that are marked. Since $|\bfW|\leq s$, there is $\bfy\in L\setminus \bfW$. Because $L\setminus \bfW\subseteq \bfS_j\setminus \bfW\subseteq \bfB$, we have that $\bfy[i]>\theta$. Then $\bfx[i]\geq\bfy[i]>\theta$ and $\bfx\in \bfB$; a contradiction.  

We conclude that  if we use $(T,k,\varphi)$ to cluster $\bfX\setminus \bfW$, then $\bfC_1\setminus \bfW,\ldots,\bfC_k\setminus \bfW$ correspond to the leaves. This proves that $(\mathcal{C},s)$ is a yes-instance of 
 \probClustExpl.
 \end{proof}   

We obtained the instance $(\mathcal{S},s)$, where $\bfY=\bigcup_{i=1}^k\bfS_i$ has $\ell\leq 2(s+1)dk$ points, that is equivalent to the original instance.  Now we modify  the points to ensure that they are in $\{1,\ldots,\ell\}^d$. For this, we observe that for each $i\in\{1,\ldots,d\}$, the values of the $i$-th coordinates can be changed if we maintain their order. Formally, we do the following. For every $i\in \{1,
\ldots,d\}$, let $\coord_i(\bfY)=\{\theta_1^i,\ldots,\theta_{r_i}^i\}$, where   $\theta_1^i<\cdots<\theta_{r_i}^i$. For every $\bfy\in\bfY$, we construct a point $\bfz$, by setting
$\bfz[i]=j$, where $\theta_j^i=\bfy[i]$, for each $i\in\{1,\ldots,d\}$. Then for  $\bfS_i$ containing $\bfy$, we replace $\bfy$ by $\bfz$. 
Denote by $\bfZ$ the constructed collection of points, and let $\mathcal{R}=\{\bfR_1,
\ldots,\bfR_k\}$ be the family of the collections of points constructed from $\bfS_1,\ldots,\bfS_k$.

We have that $(\mathcal{R},s)$ is a yes-instance of \probClustExpl if and only if $(\mathcal{S},s)$ is a yes-instance, and $\bfZ\subseteq \{1,\ldots,\ell\}^d$. Then the data reduction algorithm returns $(\mathcal{R},s)$. To complete the proof, it remains to observe that the marking procedure is polynomial, and the coordinates replacement also can be done in polynomial time.
\end{proof}

\subsection{Hardness of approximation}
\label{sec:hardclexpl}

We show that the \probClustExpl problem remains hard when the number of points to delete $s$ is small. Specifically, we provide a parameter-preserving reduction from \textsc{Hitting Set} to \probClustExpl that transfers known results about hardness of approximation for the \textsc{Hitting Set} problem 
to \probClustExpl.
Recall that in the \textsc{Hitting Set} problem, the input is a family of sets $\mathcal{A}$ over a universe $U$ together with a parameter $\ell$, and the goal is to decide whether there is a set $H \subset U$ of size at most $\ell$ such that $H$ has non-empty intersection with each set in $\mathcal{A}$. Whenever we consider \classFPT algorithms for \textsc{Hitting Set}, we assume that the problem is parameterized by $\ell$.

The starting point of our reduction is the following result for \textsc{Hitting Set} by Karthik, Laekhanukit, and Manurangsi~\cite{KarthikLM19}.

\begin{theorem}[Theorem 1.4 in~\cite{KarthikLM19}]
    Assuming ETH, no $f(\ell)(|U||\mathcal{A}|)^{o(\ell)}$-time algorithm can approximate \textsc{Hitting Set} to within a factor of $F(\ell)$, for any functions $f$ and $F$ of $\ell$.
    \label{thm:hitting_set_hardness}
\end{theorem}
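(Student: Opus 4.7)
The plan is to prove the theorem by chaining together a reduction from $3$-SAT via a gap problem on constraint satisfaction to a parametrized inapproximability of \probHS, in the style of Karthik, Laekhanukit, and Manurangsi. Concretely, I would first use the PCP theorem (or, more conveniently, Dinur's gap amplification) together with ETH to obtain a version of \textsc{Label Cover} / \textsc{MaxCover} where we cannot distinguish satisfiable instances from instances where only a $1/F'(\ell)$ fraction of constraints can be simultaneously satisfied, for any function $F'$, in time better than $f'(\ell)N^{o(\ell)}$, where $N$ is the instance size and $\ell$ is the number of ``left'' super-nodes of the \textsc{MaxCover} instance. Here the main subtlety is that standard PCPs only yield constant gap, so I need to use product/agreement-testing machinery (the \emph{threshold direct product} or \emph{distributed PCP} construction) in order to amplify the inapproximability factor to an arbitrary function $F(\ell)$ while keeping $\ell$ as the parameter.

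Next, I would give a gadget reduction from gap \textsc{MaxCover} to \probHS. Given a \textsc{MaxCover} instance with left super-nodes $V_1,\ldots,V_\ell$, right super-nodes $R_1,\ldots,R_m$, and edge constraints, set the universe $U$ to contain, for each right super-node $R_j$ and each ``valid local configuration'' on $R_j$, a corresponding element; form sets $\mathcal{A}$ indexed by pairs (left vertex, label) so that an $\ell$-hitting set in the constructed instance corresponds exactly to choosing one label per left super-node that covers (almost) all right super-nodes. Standard analysis then shows that a YES instance of \textsc{MaxCover} yields an $\ell$-hitting set, while a NO instance requires a hitting set of size $\ell \cdot F(\ell)$, transferring the multiplicative gap directly. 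The sizes satisfy $|U|, |\mathcal{A}| \leq N^{O(1)}$ and the parameter remains $\ell$.

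Combining the two reductions, an $F(\ell)$-approximation of \probHS in time $f(\ell)(|U||\mathcal{A}|)^{o(\ell)}$ would produce an algorithm distinguishing YES from NO in the gap \textsc{MaxCover} instance in time $f''(\ell)N^{o(\ell)}$, contradicting ETH via the first step. Choosing $F'$ appropriately in the \textsc{MaxCover} gap (as a function of $F$ and the gadget blowup) closes the argument for every pair $(F,f)$.

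The hard part, by far, is establishing the gap \textsc{MaxCover} hardness with \emph{arbitrary} gap $F(\ell)$ under ETH, since Dinur's PCP only yields constant gap. The standard route is the threshold direct product: take a $t$-fold product of the base \textsc{MaxCover} instance (where $t$ depends on $F$) and prove, using either a covering-code argument or a communication-complexity-based distributed PCP, that any assignment satisfying more than a $1/F(\ell)$ fraction of the product constraints implies an assignment to the base instance satisfying more than the soundness threshold. Making sure this amplification keeps the parameter at $\ell$ (rather than $\ell^t$) is exactly what requires the sophisticated \emph{distributed} PCP construction, and verifying the soundness analysis quantitatively is where essentially all of the technical work lies.
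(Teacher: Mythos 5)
There is a genuine gap, though it is partly a mismatch of scope. The statement you were asked about is not proved in the paper at all: it is imported verbatim from Karthik, Laekhanukit, and Manurangsi, and the paper's entire ``proof'' is the one-line observation that their Theorem~1.4 is stated for \textsc{Dominating Set}, which transfers to \probHS by the standard parameter-preserving reduction (a dominating-set instance is a hitting-set instance over the closed neighborhoods, with the same parameter and polynomial size). Your proposal instead sets out to reprove the Karthik--Laekhanukit--Manurangsi theorem from first principles, and as a proof it does not close: the step you yourself identify as carrying ``essentially all of the technical work'' --- obtaining, under ETH, hardness of gap \textsc{MaxCover} with an \emph{arbitrary} gap $F(\ell)$ while keeping the parameter $\ell$ and ruling out $f(\ell)N^{o(\ell)}$ time --- is only named (``threshold direct product,'' ``distributed PCP''), not established. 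That step is precisely the research contribution of the cited paper; without carrying out the communication-protocol/PCP composition and its quantitative soundness analysis, the argument is a roadmap rather than a proof. The second step also glosses a real issue: to make the multiplicative gap survive the reduction to \probHS{} (equivalently \textsc{Set Cover}/\textsc{Dominating Set}) one needs a partition-system (hypercube) gadget or an explicit ``MinCover''-style soundness statement; the claim that a NO instance forces hitting sets of size $\ell\cdot F(\ell)$ does not follow from the MaxCover soundness as directly as asserted.

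That said, your architecture does mirror the actual structure of the cited proof (ETH $\Rightarrow$ gap label-cover-type hardness via distributed PCPs $\Rightarrow$ gadget reduction to a covering problem), so the approach is not wrong --- it is simply far more than is needed here, and incomplete at its crux. For the purposes of this paper the efficient route is the one it takes: cite the known \textsc{Dominating Set} lower bound and supply only the trivial, parameter- and approximation-preserving translation to \probHS.
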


Note that Theorem 1.4 in~\cite{KarthikLM19} is stated for the \textsc{Dominating Set} problem, however by the standard parameter-preserving reduction from \textsc{Dominating Set} to \textsc{Hitting Set} (see e.g. Theorem 13.28 in \cite{CyganFKLMPPS15}), the statement above immediately follows.

Now, intuitively, given an instance $(U, \mathcal{A}, \ell)$ of \probHS, our reduction constructs clusters $\bfC_0$, \ldots, $\bfC_m$ in $\mathbb{R}^{\sum_{j \in [m]} |A_j|}$. The clusters $\bfC_1$, \ldots, $\bfC_m$ represent the sets in the family $\mathcal{A}=\{A_1,\ldots,A_m\}$, and $\bfC_0$ is a special cluster that needs to be separated from each of $\bfC_1$, \ldots, $\bfC_m$ so that the clustering is explainable. The separation can only be performed by removing special points from $\bfC_0$ each of which corresponds to an element of the universe $U$. Removing such a point allows for separation between $\bfC_0$ and each $\bfC_j$ such that the corresponding set $A_j$ contains the corresponding universe element. The two clusters can be separated along a special coordinate where only that special point ``blocks'' the separation. This is the crux of the reduction, which results in the following theorem.

\begin{theorem}
    For any functions $f$ and $F$, there is no algorithm that approximates \probClustExpl within a factor of $F(s)$ in time $f(s) (nd)^{o(s)}$, unless ETH fails. 
    \label{thm:explanation_hardness}
\end{theorem}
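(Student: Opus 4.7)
The plan is to give a polynomial-time, parameter-preserving reduction from \probHS to \probClustExpl (with $s := \ell$) and invoke Theorem~\ref{thm:hitting_set_hardness}. Given an instance $(U, \mathcal{A}, \ell)$ of \probHS with $\mathcal{A}=\{A_1, \ldots, A_m\}$, I use $d = \sum_{j=1}^m |A_j|$ coordinates, indexed by pairs $(u, j)$ with $u \in A_j$. Cluster $\bfC_0$ contains, for every $u \in U$, a point $p_u$ with $p_u[(u',j)] = 2$ if $u' = u$ and $0$ otherwise. For every $j \in [m]$, cluster $\bfC_j$ contains $s+1$ (slightly perturbed) copies of a point $q_j$ with $q_j[(u, j')] = 2$ if $j' = j$ and $0$ otherwise; the duplicates ensure $\bfC_j \setminus W$ is non-empty for every $W$ with $|W| \leq s$. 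Both $n$ and $d$ are polynomial in $|U|+|\mathcal{A}|$.

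For the easy direction I would show that a hitting set $H$ of size $\ell$ yields the removal set $W = \{p_u : u \in H\}$ together with a caterpillar threshold tree: at the $j$-th internal node (peeling off $\bfC_j$) apply the canonical cut $\cut_{(u_j, j),\, 0}$ for any $u_j \in H \cap A_j$. On coordinate $(u_j, j)$ all copies of $q_j$ sit at value $2$, the blocker $p_{u_j}$ has been removed, and every other remaining point has value $0$, so the cut cleanly isolates $\bfC_j \setminus W$ from the rest.

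The main obstacle is the backward direction: from any $W$ with $|W| \leq s$ and threshold tree $T$ explaining $\{\bfC_i \setminus W\}_i$, I need to extract a hitting set. I would set $H := \{u : p_u \in W\}$ and, for each $j$, look at the lowest common ancestor in $T$ of the two leaves corresponding to $\bfC_0 \setminus W$ and $\bfC_j \setminus W$; its cut $((u^*, j^*), \theta^*)$ splits these non-empty sets. The heart of the proof is a short case analysis that rules out every choice of $(u^*, j^*)$ except the desired one. If $j^* \neq j$, then $q_j$ has value $0$ on this coordinate, and since $|\bfC_0 \setminus W| \geq 2$ whenever $|U| > \ell$ there is some $p_{u'} \in \bfC_0 \setminus W$ with $u' \neq u^*$ also at value $0$, forbidding separation. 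If $j^* = j$ but $p_{u^*} \notin W$, then $p_{u^*}$ and $q_j$ both sit at value $2$ while some $p_{u'}$ with $u' \neq u^*$ in $\bfC_0 \setminus W$ sits at $0$, again forbidding separation. Hence $j^* = j$, $u^* \in A_j$ by construction, and $p_{u^*} \in W$, so $u^* \in H \cap A_j$. I would also verify that the perturbations making the $s+1$ copies of $q_j$ distinct (placed, e.g., in otherwise unused coordinates) do not create any new separating cuts, so the case analysis above is exhaustive.

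Since $|H| \leq |W|$ and the forward direction shows $\mathsf{OPT}_{\probClustExpl} \leq \mathsf{OPT}_{\probHS}$, an $F(s)$-approximation for \probClustExpl yields an $F(\ell)$-approximation for \probHS, while running time $f(s)(nd)^{o(s)}$ on the constructed instance becomes $f(\ell)(|U||\mathcal{A}|)^{o(\ell)}$ on the original, contradicting Theorem~\ref{thm:hitting_set_hardness} under ETH.
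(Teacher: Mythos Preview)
Your overall strategy matches the paper's: reduce from \probHS using one coordinate per (element, set) incidence, one cluster $\bfC_j$ per set, and element-points $p_u$ in a special cluster $\bfC_0$ that block separating $\bfC_0$ from $\bfC_j$ unless a hitting element is deleted. The forward direction and the LCA case analysis are essentially what the paper does.

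There is, however, a genuine gap in the backward direction that breaks the \emph{approximation} hardness (as opposed to mere exact hardness). To transfer $F$-inapproximability, you must show that \emph{any} $W$ with $|W|\le F(s)\cdot s$ that makes the clustering explainable yields a hitting set of size $\le |W|$. You only argue this for $|W|\le s$; indeed your construction places only $s+1$ copies in each $\bfC_j$, so an $F(s)$-approximate solution of size $F(s)\cdot s$ can simply wipe out an entire $\bfC_j$. Then no cut in the tree needs to separate $\bfC_0$ from that $\bfC_j$, and you cannot extract an element hitting $A_j$. The same issue arises for $\bfC_0$: it contains only $|U|$ points (no anchor), so it too can be emptied when $F(s)\cdot s\ge |U|$, and your ``$|\bfC_0\setminus W|\ge 2$'' step fails.

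The paper's construction avoids this precisely by inflating the non-deletable parts to size $F(\ell)\cdot\ell+1$: each $\bfC_j$ consists of $F(\ell)\cdot\ell+1$ identical copies of the set-indicator vector, and $\bfC_0$ additionally contains $F(\ell)\cdot\ell+1$ identical zero vectors. Then for any $W$ with $|W|\le F(s)\cdot s$ one may assume WLOG that $W$ touches only the element-points $p_u$ (removing copies from an identical block is useless), and your case analysis goes through unchanged for all such $W$. With this single modification (and dispensing with the perturbations, since the problem allows multisets), your argument becomes correct and coincides with the paper's.
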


\begin{proof}
    We show a reduction from \textsc{Hitting Set}. Consider an instance of \textsc{Hitting Set} over a universe $U$ with a family of sets $\mathcal{A}$ and the size of the target hitting set $\ell$. We construct the following instance of \probClustExpl, denote $|\mathcal{A}| = m$. The target dimension $d$ is equal to the sum of set sizes in the family $\mathcal{A}$, $d = \sum_{S \in \mathcal{A}} |S|$. The number of clusters in the constructed instance is $m + 1$, and for clarity we denote them by $\bfC_0$, \ldots, $\bfC_m$. The target parameter $s$, that is, the number of points to remove from the cluster, is set exactly to $\ell$.

    Now we describe how the clusters are composed. Intuitively, the clusters $\bfC_1$, \ldots, $\bfC_m$ represent the sets in the family $\mathcal{A}$, and $\bfC_0$ is a special cluster that needs to be separated from each of $\bfC_1$, \ldots, $\bfC_m$ so that the clustering is explainable. The separation can only be performed by removing special points from $\bfC_0$ that each correspond to an element of the universe $U$. Removing such a point allows for separation between $\bfC_0$ and each $\bfC_j$ such that the corresponding set $S_j$ contains the corresponding universe element. The two clusters can be separated along a special coordinate where only that special point ``blocks'' the separation. This is the crux of the reduction.

    Formally, we define point sets $\bfC_0$, \ldots, $\bfC_m$ in terms of coordinates in $\mathbb{R}^d$. The constructed instance is binary, that is, only values zero and one are used in the vectors.
    Order arbitrarily the sets in the family $\mathcal{A}$, i.e. $\mathcal{A} = \{S_1, \ldots, S_m\}$, for each $j \in [m]$, the cluster $\bfC_j$ corresponds to the set $S_j$. The $d$ coordinates are partitioned between the $m$ sets in the following way. For each $j \in [m]$, denote $I_j = \left[\sum_{j' < j} |S_{j'}| + 1, \sum_{j' \le j} |S_{j'}|\right] \subset [d]$. That is, the first range of coordinates $I_1$ are the first $|S_1|$ coordinates, $I_2$ are the following $|S_2|$ coordinates, and so on. Now, for $j \in [m]$, the set $\bfC_j$ consists of $F(\ell) \cdot \ell + 1$ identical points $\bfw_j$, where $\bfw_j[i] = 1$ for $j \in I_j$ and $\bfw_j[i] = 0$ for $j \notin I_j$.
    The set $\bfC_0$ consists of two parts, $\bfC_0 = \bfO \cup \bfV$. First, there are $F(\ell) \cdot \ell + 1$ identical zero vectors in $\bfC_0$, we denote this set by $\bfO$. Second, for each element $u$ in the universe $U$, there is a point $\bfv_u$ in $\bfV$. The coordinates of $\bfv_u$ are set so that for every $j \in [m]$ such that $u \in S_j$, there is exactly one coordinate in $I_j$ where $\bfv_u$ is set to one, and this coordinate is unique for each $u \in S_j$. If for $j \in [m]$, $u \notin S_j$, then $\bfv_u[i] = 0$ for all $i \in I_j$. More specifically, for each $S_j \in \mathcal{A}$, order arbitrarily the elements of $S_j$, $S_j = \{u_1, \ldots, u_{|S_j|}\}$. For $i \in [|S_j|]$, $\bfv_{u_i}[\sum_{j' < j} |S_j| + i] = 1$. After performing the above for each $j \in [m]$, set the remaining coordinates of each vector $\bfv_u$ to zero. This concludes the construction.
    
    Now we show that an $F(s)$-approximate solution to the constructed \probClustExpl instance would imply an $F(\ell)$-approximate solution to the original \textsc{Hitting Set} instance.
    Specifically, we prove the following.
    \begin{claim}
    Whenever there exists a set $\bfW \subset \bfX$ of size at most $F(s) \cdot s$ such that $\{\bfC_0 \setminus \bfW, \ldots, \bfC_m \setminus \bfW\}$ is an explainable clustering, there also exists a set $H \subset U$ that is a hitting set of $\mathcal{A}$ and $|H| \le |\bfW|$.
    On the other hand, for any hitting set $H \subset U$ there exists a solution $\bfW$ to the \probClustExpl instance such that $|\bfW| = |H|$.
    \label{claim:approximation}
    \end{claim}

    \begin{proof}
    In the forward direction, consider such a set $\bfW \subset \bfX$. We may assume that $\bfW \cap \bfC_j = \emptyset$ for each $j \in [m]$ and $\bfW \cap \bfO = \emptyset$, since these sets consist of $F(s) \cdot s + 1$ identical points, and replacing $\bfW$ by a smaller set not intersecing the sets above is still a solution. Thus, $\bfW \subset \bfV$. Recall that each of the points in $\bfV$ corresponds to an element of the universe $U$ in the \textsc{Hitting Set} instance. Denote by $H$ the subset of $U$ corresponding to $\bfW$, that is, $H = \{u \in U : \bfv_u \in \bfW\}$. Clearly $|H| \le |\bfW|$, we claim that $H$ is a solution to the \textsc{Hitting Set} instance.
   
    Consider a threshold tree that provides explanation for $\{\bfC_0 \setminus \bfW, \ldots, \bfC_m \setminus \bfW\}$, by the above all these clusters are non-empty. For $j \in \{0\}\cup [m]$, denote $\bfC_j' = \bfC_j \setminus \bfW$, recall that we assume $\bfC_j = \bfC_j'$ for all $j \in [m]$. We now show that for any $j \in [m]$, the set $H$ has a non-empty intersection with the set $S_j$.
    Since the tree represents the clustering, for each $j \in [m]$, there exists a cut in the tree that separates $\bfC_j'$ and $\bfC_0'$. That is, there exists a cut $\cut_{i, \theta} (\bfX') = (\bfX'_1, \bfX'_2)$ in the tree for $\bfX'  \subset \bfX$ such that $\bfC'_j \cup \bfC'_0 \subset \bfX'$, and $\bfC'_j \subset \bfX'_1$, $\bfC'_0 \subset \bfX'_2$ (w.l.o.g). The dimension of the cut $i$ necessarily belongs to $I_j$, since in all other coordinates $\bfC_j'$ is indistinguishable from $\bfO$. For $i \in I_j$, there exists a point $\bfv_u \in \bfV$ such that $\bfv_u[i] = 1$ and $u \in S_j$. Since the points in $\bfC_j'$ are set to one in this coordinate and the points in $\bfO$ to zero, $\bfv_u$ has to be in $\bfW$. Thus, by construction, $u \in H$, and the set $S_j$ is hit by $H$.

    In the other direction, assume that there exists a hitting set $H$ for $\mathcal{A}$.
    We set $\bfW$ to be the corresponding to $H$ subset of $\bfV$, $|\bfW| = |H|$. For every $j \in [m]$, there exists an element $u \in H$ such 
    that $u \in S_j$. Consider the corresponding element $\bfv_u$ of $\bfW$, and the coordinate $i \in I_j$ where $\bfv_u[i] = 1$.
    By construction,  $\bfv_u$ is the only element of $\bfC_0$ that has one in the $i$-th coordinate, and for any $j' \ne j$, the elements of $\bfC_{j'}$ have zero in thes coordinate. Thus, cutting over the dimension $i$ with the threshold $\theta$ set to zero separates $\bfC_j$ from all the other clusters. Finally, the threshold tree for $\{\bfC_0 \setminus \bfW, \ldots, \bfC_m \setminus \bfW\}$ is constructed by separating out each $\bfC_j$ one by one via the corresponding cut.
    \end{proof}
    
    The theorem follows easily from Claim~\ref{claim:approximation}. Namely, assume there exists an $F(s)$-approximate algorithm for \probClustExpl with running time $f(s) (nd)^{o(s)}$, that is, an algorithm that correctly decides either that the input instance has a solution of size at most $F(s) \cdot s$, or that it has no solution of size at most $s$. We construct an $F(\ell)$-approximate algorithm for \textsc{Hitting Set} as follows. First, construct a \probClustExpl instance via the reduction above. Second, run the \probClustExpl algorithm
    on that instance and output its answer. If it returns that there is a solution $\bfW$ of size at most $F(s) \cdot s$, by Claim~\ref{claim:approximation}
    there is a solution $H$ to the original \textsc{Hitting Set} instance of size at most $F(s) \cdot s = F(\ell) \cdot \ell$.
    If there is no solution to the \probClustExpl instance of size at most $s$, by Claim~\ref{claim:approximation} there also cannot be a solution to the \textsc{Hitting Set} instance of size at most $s = \ell$. This shows that the constructed algorithm provides indeed an $F(\ell)$-approximation for the \textsc{Hitting Set} problem. Finally, its running time can be bounded as $f(s) (nd)^{o(s)} = f(\ell)(F(\ell)|U| m)^{o(\ell)}$,
    which contradicts Theorem~\ref{thm:hitting_set_hardness} unless ETH fails.
\end{proof}

\section{Explainable Clustering}
\label{sec:explainable}
\paragraph{Explainable \probmeanmed{} clustering.} We consider the \probEMeanClust (resp. \probEMedClust) problem where given a collection $\bfX\subseteq \mathbb{R}^d$ of $n\geq k$ points, the task is to find an explainable $k$-clustering $\{\bfC_1,\ldots,\bfC_k\}$ of $\bfX$ of minimum $k$-means (resp. $k$-median) cost.


\subsection{Exact algorithms} 

Our $(nd)^{k+\Oh(1)}$ time algorithm is indeed very simple and based on branching technique. At each non-leaf node of threshold tree, we would like to find an optimal cut. As we focus on canonical threshold trees, the number of distinct choices for branching is at most $nd$. Also as the number of non-leaf nodes in the threshold binary tree is $k-1$, we have the following theorem. 

\begin{theorem}\label{obs:naive}
 \probEMeanClust and  \probEMedClust can be solved in $(nd)^{k+\Oh(1)}$ time.
 \end{theorem}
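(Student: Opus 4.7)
The plan is to enumerate all canonical threshold trees with $k$ leaves and evaluate the clustering cost of each, returning the best. The correctness is built in: by definition of explainable $k$-clustering and the observation in the preliminaries that it suffices to consider canonical trees, every optimal explainable $k$-clustering corresponds to at least one canonical threshold tree, so exhaustive enumeration cannot miss the optimum. The only task is to bound the number of candidates and the cost of evaluating one.

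To count candidates, I would separate the tree shape from the labelings. A canonical threshold tree $(T, k, \varphi)$ consists of (i) a rooted full binary tree $T$ with $k$ leaves and hence $k - 1$ internal nodes, and (ii) for each internal node $v$, a label $\varphi(v) = (i, \theta)$ with $i \in \{1, \ldots, d\}$ and $\theta \in \coord_i(\bfX)$. The number of shapes $T$ is the $(k-1)$-th Catalan number, bounded by $4^k$ as already noted in the preliminaries. For each of the $k-1$ internal nodes, there are at most $d \cdot n$ canonical labels. Thus the total number of canonical threshold trees is at most $4^k \cdot (nd)^{k-1}$.

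For each candidate tree, I would evaluate it in a single top-down pass: start with $\bfX_r = \bfX$ at the root, and for each internal node $v$ with children $u, w$, set $(\bfX_u, \bfX_w) = \cut_{\varphi(v)}(\bfX_v)$; the cluster associated with each leaf is the resulting $\bfX_\ell$. Given the $k$ leaf-clusters, computing the $k$-means cost (or the $k$-median cost, with the coordinate-wise median as center) takes time $(nd)^{\Oh(1)}$. Multiplying the bound on the number of candidate trees by this per-tree cost gives total running time $4^k \cdot (nd)^{k-1} \cdot (nd)^{\Oh(1)} = (nd)^{k+\Oh(1)}$, since $4^k \le (nd)^{\Oh(1)} \cdot 4^k$ is absorbed in the exponent for $n \ge 2$.

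There is no real obstacle here; the only point that deserves explicit mention in a clean write-up is why it suffices to restrict attention to canonical cuts (already established in the preliminaries) and why the Catalan bound $4^k$ applies to the shapes. One could alternatively phrase the algorithm as a recursion on subsets defined by axis-aligned boxes together with a budget $k' \le k$, which also yields the same bound, but the flat enumeration above is the most transparent route to the stated running time.
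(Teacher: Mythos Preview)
Your proposal is correct and follows essentially the same approach as the paper: the paper phrases it as a branching procedure (at each of the $k-1$ nonleaf nodes, branch over the at most $nd$ canonical cuts), while you phrase it as a flat enumeration of tree shapes times labelings, but these amount to the same exhaustive search over canonical threshold trees. Your write-up is in fact more explicit than the paper's one-paragraph sketch, since you separately account for the Catalan bound $4^k$ on tree shapes; note, though, that the final absorption of $4^k$ into $(nd)^{k+\Oh(1)}$ is not literally valid for large $k$, which is why the paper's introduction states the bound as $(4nd)^{k+\Oh(1)}$.
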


 Our $n^{2d}\cdot (dn)^{\Oh(1)}$ time algorithm is based on dynamic programming, which we describe in the following.  For two vectors $\bfx,\bfy\in\mathbb{R}^d$, we write $\bfx\leq \bfy$  ($\bfx< \bfy$, respectively) to denote that $\bfx[i]\leq \bfy[i]$ ($\bfx[i]< \bfy[i]$, respectively) for every $i\in\{1,\ldots,d\}$. We highlight that when we write $\bfx<\bfy$, we require the strict  inequality for \emph{every} coordinate.  
 
 \begin{theorem}\label{prop:DP}
  \probEMeanClust and  \probEMedClust can be solved in $n^{2d}\cdot (dn)^{\Oh(1)}$ time.
  \end{theorem}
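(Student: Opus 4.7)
The plan is to do a dynamic programming over axis-parallel ``bounding boxes'' analogous to the one developed for \probClustExpl in Theorem~\ref{thm:DP-expl}, but with the cost of an explainable sub-clustering playing the role of the number of outliers. Since the paper has already established that one only needs to consider canonical threshold trees, every node of any optimal threshold tree corresponds to a set of points of the form $\bfX\cap(\bfa,\bfb]$, where the coordinates of $\bfa$ and $\bfb$ lie in $\coord_i(\bfX)\cup\{-\infty,+\infty\}$; call such $(\bfa,\bfb)$ \emph{canonical pairs}. There are at most $(n+2)^{2d}$ of them, which is the origin of the $n^{2d}$ factor.

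First, for every canonical pair $(\bfa,\bfb)$ with $\bfa<\bfb$ and every $k'\in\{1,\ldots,k\}$, define
\[
f(\bfa,\bfb,k')=\min\{\cost_p(\bfC_1,\ldots,\bfC_{k'})\mid \{\bfC_1,\ldots,\bfC_{k'}\}\text{ is an explainable }k'\text{-clustering of }\bfX\cap(\bfa,\bfb]\},
\]
where $p\in\{1,2\}$ depending on whether we are solving \probEMedClust or \probEMeanClust. If $|\bfX\cap(\bfa,\bfb]|<k'$ the value is set to $+\infty$. The final answer is then $f(\bfa^*,\bfb^*,k)$ with $\bfa^*[i]=-\infty$ and $\bfb^*[i]=+\infty$ for all $i$. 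The base case $k'=1$ is just $\cost_p(\bfX\cap(\bfa,\bfb])$, computable in $(dn)^{\Oh(1)}$ time from the definition of the mean/median.

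For $k'\geq 2$, the recurrence mirrors the one in Theorem~\ref{thm:DP-expl}: using the vectors $\bfa^{i,\theta}$ and $\bfb^{i,\theta}$ defined there,
\[
f(\bfa,\bfb,k')=\min\bigl\{f(\bfa,\bfb^{i,\theta},k_1)+f(\bfa^{i,\theta},\bfb,k_2)\bigr\},
\]
where the minimum is over $i\in\{1,\ldots,d\}$, $\theta\in\coord_i(\bfX)$ with $\bfa[i]<\theta<\bfb[i]$, and pairs $k_1,k_2\geq 1$ with $k_1+k_2=k'$. Correctness in the ``$\leq$'' direction follows by taking the cut at the root of an optimal threshold tree for $\bfX\cap(\bfa,\bfb]$, which produces two subtrees over $\bfX\cap(\bfa,\bfb^{i,\theta}]$ and $\bfX\cap(\bfa^{i,\theta},\bfb]$ of costs at least $f(\bfa,\bfb^{i,\theta},k_1)$ and $f(\bfa^{i,\theta},\bfb,k_2)$; since the overall \probmeanmed cost is additive over a partition of clusters, no interaction terms appear and no bookkeeping beyond $k_1+k_2=k'$ is needed. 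The other direction is immediate by concatenating threshold trees realising the two minima.

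For the running time, the table has $(n+2)^{2d}\cdot k$ entries, and filling one entry requires going through $\Oh(nd)$ canonical cuts and $\Oh(k)$ splits of $k'$, each of which triggers an $\Oh(1)$ lookup and arithmetic. Since $k\leq n$, this gives a total of $n^{2d}\cdot (dn)^{\Oh(1)}$ time. The main thing to check carefully is that the cost decomposes additively across the two sides of a canonical cut (so that no extra index tracking the ``shape'' of descendant clusters is needed); this is precisely the reason the \probmeanmed cost is defined as a sum over clusters, so there is no real obstacle beyond setting up the DP cleanly.
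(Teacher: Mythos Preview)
Your proposal is correct and follows essentially the same approach as the paper: a dynamic program indexed by canonical half-open boxes $(\bfa,\bfb]$ and the number of clusters, with the same recurrence splitting on a canonical cut $(i,\theta)$ and a decomposition $k_1+k_2=k'$. The only cosmetic differences are notation ($f$ versus the paper's $\omega$) and that the paper handles the empty-box case slightly differently; your labeling of the two inequality directions is a bit ambiguous, but the arguments themselves are the right ones.
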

   
 \begin{proof} 
 The algorithms for both problems are almost the same. Hence, we demonstrate it for \probEMeanClust. For simplicity, we only show how to find the minimum cost of clustering but the algorithm can be easily modified to produce an optimal clustering as well by standard arguments.
 
 Let $(\bfX,k)$ be an instance of the problem with $\bfX=\{\bfx_1,\ldots,\bfx_n\}$ and $\bfX\subseteq\mathbb{R}^d$. We say that a vector $\bfz\in(\mathbb{R}\cup\{\pm\infty\})^d$ is \emph{canonical}  if $\bfz[i]\in \coord_i(\bfX)\cup\{\pm\infty\}$ for every $i\in\{1,\ldots,d\}$. For every pair of canonical vectors $(\bfa,\bfb)$ such that $\bfa\leq \bfb$ and every positive integer $s\leq k$, we compute the minimum means cost of an explainable $s$-clustering of $\bfX_{\bfa,\bfb}=\{\bfx_i\in \bfX\mid \bfa< \bfx_i\leq\bfb\}$ and denote this value $\omega(\bfa,\bfb,s)$.  We assume that $\omega(\bfa,\bfb,s)=+\infty$ if $\bfX_{\bfa,\bfb}$ does not admit an explainable $s$-clustering.  It is also convenient to assume that  $\omega(\bfa,\bfb,s)=+\infty$ if $\bfX_{\bfa,\bfb}=\emptyset$, because we are not interested in empty clusters.
  Notice that the minimum means cost of an explainable $k$-clustering of $\bfX$ is $\omega(\bfa^*,\bfb^*,k)$, where $\bfa^*[i]=-\infty$ and $\bfb^*[i]=+\infty$ for $i\in\{1,\ldots,d\}$. We compute the table of values of $\omega(\bfa,\bfb,s)$ consecutively for $s=1,2,\ldots,k$.
 
 If $s=1$, then by definition, 
 $$\omega(\bfa,\bfb,s)=
 \begin{cases}
 \cost_2(\bfX_{\bfa,\bfb})&\mbox{if }\bfX_{\bfa,\bfb}\neq\emptyset,\\
 +\infty&\mbox{if }\bfX_{\bfa,\bfb}=\emptyset, 
 \end{cases}
 $$ 
 and this value can be computed in polynomial time.  Let $s\geq 2$ and assume that the tables are already constructed for the lesser values of $s$. Consider a pair $(\bfa,\bfb)$ of canonical vectors of 
 $(\mathbb{R}
 \cup\{\pm\infty\})^d$ such that $\bfa\leq \bfb$. For $i\in \{1,\ldots,d\}$ and $\theta\in \coord_i(\bfX)$ such that $\bfa[i]< \theta<\bfb[i]$, we define the vectors $\bfa^{i,\theta}$ and $\bfb^{i,\theta}$ by setting
 $$\bfa^{i,\theta}[j]=
 \begin{cases}
 \theta&\mbox{if }j=i,\\
 \bfa[j]&\mbox{if }j\neq i,
 \end{cases}
 \text{ and }
\bfb^{i,\theta}[j]=
 \begin{cases}
 \theta&\mbox{if }j=i,\\
 \bfb[j]&\mbox{if }j\neq i.
 \end{cases}
$$
Then we compute $\omega(\bfa,\bfb,s)$ using the following recurrence
\begin{align}\label{eq:rec-dp}
\omega(\bfa,\bfb,s)=\min\{\omega(\bfa,&\bfb^{i,\theta},s_1)+\omega(\bfa^{i,\theta},\bfb,s_2)\nonumber\\
&\text{for }  1\leq i\leq d, \theta\in\coord_i(\bfX),\nonumber\\
& \bfa[i]< \theta< \bfb[i],\nonumber\\
& s_1,s_2\geq 1,\text{ and }s_1+s_2=s\}.
\end{align}
 The correctness of (\ref{eq:rec-dp}) follows form the definition of the explainable clustering. It is sufficient to observe that to compute the optimum means cost of an explainable $s$-clustering of $\bfX_{\bfa,\bfb}$, we have to take minimum 
 over the sums of  optimum costs of explainable  $s_1$-clusterings and $s_2$-clusterings of $X_1$ and $X_2$, respectively, where $(X_1,X_2)=\cut_{i,\theta}(\bfX_{\bfa,\bfb})$ for some $i\in \{1,\ldots,d\}$, $\theta\in\coord_i(\bfX_{\bfa,\bfb})$ and $s_1+s_2=s$, and this is exactly what is done in (\ref{eq:rec-dp}).  
 
To evaluate the running time, observe that to compute $\omega(\bfa,\bfb,s)$ using (\ref{eq:rec-dp}), we consider $d$ values of $i$, at most $n$ values of $\theta$ and at most $k\leq n$ values of $s_1$ and $s_2$, that is, we go over at most  $dn^2$ choices. Thus, computing $\omega(\bfa,\bfb,s)$ for $s\geq 2$ and fixed $\bfa$ and $\bfb$  can be done in $\Oh(dn^2)$ time. Since there are at most $(n+2)^{2d}$ pairs of canonical vectors $\bfa$ and  $\bfb$, we obtain that the  time to compute the table of values of $\bfX_{\bfa,\bfb}$ for all pairs of vectors is $n^{2d}\cdot (dn)^{\Oh(1)}$. Since the table for $s=1$ can be constructed in $n^{2d}\cdot (dn)^{\Oh(1)}$ and we iterate using (\ref{eq:rec-dp}) $k-1\leq n$ times, the total running time is $n^{2d}\cdot (dn)^{\Oh(1)}$.
 \end{proof} 


\subsection{Hardness}
\label{sec:hardexplncl}

In this section, we show our hardness results for \probEMeanClust and \probEMedClust: the problems are  \classNP-complete, \classW{2}-hard when parameterized by $k$, and cannot be solved in $f(k)\cdot n^{o(k)}$ time for a computable function $f(\cdot)$ unless ETH fails.
Moreover, the hardness holds even if the input points are binary. More precisely, we prove the following theorem.


\begin{theorem}
\label{thm:exp-ethhard} 
Given a collection of $n$ points $\bfX\subseteq \{0,1\}^d$, a positive integer $k\leq n$, and a nonnengative integer $B$, it is \classW{2}-hard to decide whether $\bfX$ admits an explainable $k$-clustering of  mean (median, respectively) cost at most $B$, when the problem is parameterized by $k$. Moreover, the problems are  \classNP-complete and cannot be solved in $f(k)\cdot n^{o(k)}$ time for a computable function $f(\cdot)$ unless ETH fails.
\end{theorem}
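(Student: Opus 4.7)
The plan is to reduce from \textsc{Set Cover}, which is \classW{2}-hard parameterized by the solution size $\ell$ and, under ETH, admits no $f(\ell)\cdot N^{o(\ell)}$-time algorithm where $N=|U|+|\mathcal{F}|$ (see e.g.\ Theorem~13.28 in~\cite{CyganFKLMPPS15}). The reduction will be parameter-preserving with $k=\ell+1$, which transfers \classW{2}-hardness, the ETH lower bound $f(k)\cdot n^{o(k)}$, and classical \classNP-hardness to \probEMeanClust and \probEMedClust simultaneously. Moreover, the constructed instance will lie in $\{0,1\}^d$, so binary hardness comes for free.

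Given a \textsc{Set Cover} instance $(U,\mathcal{F},\ell)$ with $\mathcal{F}=\{F_1,\ldots,F_m\}$, I would build $\bfX\subseteq\{0,1\}^d$ with $d=m$, interpreting coordinate $i$ as the set $F_i$. The point set would consist of three kinds of points: (i)~a heavy ``hub'' of $N$ identical copies of the origin $\bfO\in\{0,1\}^d$ for a sufficiently large $N$; (ii)~for each $u\in U$, an ``element point'' $\bfp_u$ with $\bfp_u[i]=1$ iff $u\in F_i$; and (iii)~for each $i\in[m]$, an auxiliary ``set-witness'' gadget designed so that in any low-cost clustering a cut on coordinate~$i$ with threshold~$0$ becomes the \emph{only} affordable way to separate the witness of $F_i$ from the hub. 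Setting $k=\ell+1$ and a cost budget $B$ polynomial in $|U|+|\mathcal{F}|$, the two directions would go as follows. \emph{Forward:} a cover $\{F_{i_1},\ldots,F_{i_\ell}\}$ yields a left-spine threshold tree whose $j$-th cut is on coordinate $i_j$ at threshold~$0$; by the covering property each $\bfp_u$ is routed to one of the $\ell$ right leaves, the hub stays alone in the leftmost leaf at zero cost, and a direct computation bounds the total cost by $B$. \emph{Backward:} the choice of $N$ forces the hub to be essentially undisturbed in a single leaf of any clustering with cost $\leq B$, so the $\ell$ cuts on the root-to-hub path use some coordinates $i_1,\ldots,i_\ell$; the witness gadgets then guarantee that if $\{F_{i_1},\ldots,F_{i_\ell}\}$ is not a cover, some $\bfp_u$ with $\bfp_u[i_j]=0$ for all $j$ is trapped in the hub's leaf, contributing cost $>B$.

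The reduction runs in polynomial time, produces only binary points, and sends $\ell$ to $k=\ell+1$, so all three claimed hardness statements follow at once. The $k$-means and $k$-median versions can be handled by the same construction because on $\{0,1\}$-points the two costs are polynomially related and only the qualitative distinction ``zero cost versus cost $\geq 1$'' is used at the decisive places of the argument; it suffices to pick separate budgets $B_{\mathsf{mean}}$ and $B_{\mathsf{median}}$. The main obstacle, as in most clustering reductions, is calibrating $N$, the multiplicities, the witness gadgets, and $B$ so that both directions of the equivalence hold simultaneously: the gadgets must contribute negligibly in the yes-instance tree yet enforce a strict cost blow-up when coverage fails. A secondary subtlety, not present in the deletion-based reduction of Theorem~\ref{thm:explanation_hardness}, is that costs are additive rather than combinatorial, so one has to verify that the cost of a ``good'' clustering grows only polynomially in the input size, ensuring $B$ is of polynomial bit-length and the instance remains a valid polynomial-time reduction.
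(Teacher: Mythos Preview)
Your plan is essentially the paper's proof: the paper reduces from \textsc{Hitting Set} (the dual of your \textsc{Set Cover}), takes coordinates indexed by the universe, a huge block of origin copies as the hub, characteristic vectors of the sets as your element points, and $s$ copies of every standard unit vector $e_i$ as the gadgets you leave unspecified; with $k'=k+1$ and a carefully computed budget, it first proves that any threshold tree of cost $\le B$ must be a left spine (the step your backward sketch silently assumes when it speaks of ``the $\ell$ cuts on the root-to-hub path'') and then reads off a hitting set from the spine's cut coordinates. The one place your informal sketch oversells is the claim that for $k$-means only a qualitative zero-versus-nonzero cost distinction is needed: the means of the constructed clusters are not binary, and the bulk of the paper's work goes into bounding these fractional means coordinatewise to make the cost inequalities close, whereas the $k$-median case is genuinely easier precisely because medians of binary points are binary.
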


\begin{proof}
We show the theorem for the means costs and then briefly explain how the proof is modified for medians. The case of median cost is easier due to the fact that for a collection of binary points, its median is also a binary vector. 

We reduce from the \probHS problem. The task of the problem is, given a family of sets $\mathcal{W}=\{W_1,\ldots,W_m\}$ over  universe $U=\{u_1,\ldots,u_n\}$, and a positive integer $k$, decide whether there  is  $S\subseteq U$ of size at most $k$ that is a  \emph{hitting} set for $\mathcal{W}$, i.e.  such that  $S\cap W_i\neq\emptyset$ for every $i\in\{1,\ldots,m\}$. This problem is well-known to be \classW{2}-complete  when parameterized by $k$ even if the input is restricted to families of sets of the same size (see~\cite{DowneyF13}). 

Let $\mathcal{W}=\{W_1,\ldots,W_m\}$ be a family of sets over $U=\{u_1,\ldots,u_n\}$ with $|W_1|=\cdots=|W_m|=r$, and let $k$ be a positive integer. We construct the following points.
\begin{itemize}
\item Let $s=8nm^2$. For each $i\in\{1,\ldots,n\}$, we construct a vector $\bfu_i\in\{0,1\}^n$ by setting 
$$\bfu_i[j]=
\begin{cases}
1&\mbox{if }j=i,\\
0&\mbox{otherwise}.
\end{cases}
$$
We also define $\bfU_i$ to be the collection of $s$ points identical to $\bfu_i$.
\item For each $i\in\{1,\ldots,m\}$, let $\bfw_i$ be the characteristic vector of $W_i$.
That is, 
$$\bfw_i[j]=
\begin{cases}
1&\mbox{if }u_j\in W_i,\\
0&\mbox{otherwise},
\end{cases}
$$
 We define $\bfW=\{\bfw_1,\ldots,\bfw_m\}$.
\item For $t=16ns^2$ we construct a collection $\bfZ$ of $t$ zero points.
\end{itemize}
Finally,  we define
 \begin{itemize}
 \item $\bfX=\big(\bigcup_{i=1}^n\bfU_i\big)\cup\bfW\cup\bfZ$,
 \item $k'=k+1$, and
 \item $B=m(r-1)+s(n-k)$.
 \end{itemize}
 
 Clearly, $\bfX$, $k'$ and $B$ can be constructed form the considered instance of \probHS in polynomial time.
 We claim that $\mathcal{W}$ has a hitting set $S$ of size at most $k$ if and only if $\bfX$ has an explainable $k'$-clustering of means cost at most $B$.
 
 For the forward direction, assume that $S$ is a hitting set of size at most $k$. Without loss of generality we assume that $|S|=k$. Let $S=\{u_{i_1},\ldots,u_{i_k}\}$. We define the $k'$-clustering $\{\bfC_1,\ldots,\bfC_{k+1}\}$ as follows. We set $\bfC_1=\{\bfx\in\bfX\mid \bfx[i_1]=1\}$ and for $j=2,\ldots,k+1$, 
 $$\bfC_j=\{\bfx\in\bfX\mid \bfx[i_j]=1\}\setminus\bigcup_{h=1}^{j-1}\bfC_h.$$ 
 Observe that the constructed clustering is explainable. To see this, let $X_0=\bfX$ and set $X_j=\bfX\setminus \bigcup_{h=1}^{j-1}\bfC_h$. Then $(X_j,\bfC_j)=\cut_{i_j,0}(X_{j-1})$ for $j\in \{1,\ldots,k+1\}$. Notice that $\bfU_{i_j}\subseteq \bfC_j$ for every $j\in\{1,\ldots,k\}$, and we have that $\bfZ\subseteq \bfC_{k+1}$ and $\bfU_h\subseteq \bfC_{k+1}$ for all $h\in\{1,\ldots,m\}\setminus \{i_1,\ldots,i_k\}$. Moreover, because $S$ is a hitting set, each $\bfw_j\in\bfW$ is included in some cluster $\bfC_h$ for some $h\in\{1,\ldots,k\}$, that is, $\bfC_{k+1}=\bfZ\cup \bfR$ , where $\bfR= \bigcup_{h\in\{1,\ldots,m\}\setminus \{i_1,\ldots,i_k\}}\bfU_h$. 
 
 For every $j\in\{1,\ldots,k\}$, we define $\bfc_j=\bfu_{i_j}$, and we set $\bfc_{k+1}$ be the zero vector. Clearly, for every $j\in\{1,\ldots,k+1\}$, $\cost_2(\bfC_j)\leq \sum_{\bfx\in \bfC_j}\|\bfx-\bfc_j||_2^2$. Let $\bfW_j=\bfW\cap\bfC_j$ for $j\in\{1,\ldots,k\}$. Note that $\bfC_j=\bfW_j\cup\bfU_{i_j}$ for $j\in\{1,\ldots,k\}$.
 Because $\bfx[i_j]=1$ for every $\bfx\in\bfC_j$ for  $j\in \{1,\ldots,k\}$ and $\bfx$ has exactly $r$ nonzero elements, we have that $\|\bfx-\bfc_j\|_2^2=r-1$ for every $\bfx\in\bfW_j$ for $j\in\{1,\ldots,k\}$. 
 If $\bfx\in \bfR$, then $\|\bfx-\bfc_{k+1}\|_2^2=1$. We obtain that 
 \begin{multline*}
 \cost_2(\bfC_1,\ldots,\bfC_{k+1})\leq \sum_{j=1}^{k+1}\sum_{\bfx\in \bfC_j}\|\bfx-\bfc_j\|_2^2
 =\sum_{j=1}^k\sum_{\bfx\in\bfW_j}\|\bfx-\bfc_j\|_2^2+\sum_{\bfx\in\bfR}\|\bfx-\bfc_{k+1}\|_2^2\\
 =m(r-1)+s(n-k)=B.
 \end{multline*}
 Thus, $\{\bfC_1,\ldots,\bfC_{k+1}\}$ is an explainable $k'$-clustering of means cost at most $B$. 
 
 For the opposite direction, let $\{\bfC_1,\ldots,\bfC_{k+1}\}$ be an explainable $k'$-clustering with \linebreak $\cost_2(\bfC_1,\ldots,\bfC_{k+1})\leq B$. Let also $(T,k',\varphi)$ be a canonical threshold tree for this clustering.  Notice that because every point of $\bfX$ is binary, for every nonleaf $v\in V(T)$, $\varphi(v)=(i,0)$ for some $i\in\{1,\ldots,d\}$.

We show the following property of  $(T,k',\varphi)$: for every nonleaf node $v\in V(T)$, its right child is a leaf. Suppose that this is not the case. Denote by $x$ the root of $T$ and let $P=x_1\cdots x_p$ be the root-leaf path, where $x_1=x$ and $x_i$ is the left child of $x_{i-1}$ for $i\in\{2,\ldots,p\}$, that is, $P$ is constructed starting from the root and following the left children until we achieve the leftmost leaf.  Because  $T$ has $k'$ leaves and at least one right  child is not a leaf, we have that $p\leq k'-1=k$. Denote by $\bfC_q$ the cluster corresponding to the leaf $x_p$ of $T$. Notice that $\bfZ\subseteq \bfC_q$ and $n-(p-1)$ collections $\bfU_i$ are included in $\bfC_q$ for some $i\in\{1,\ldots,n\}$.
 Let $i_1,\ldots,i_{n-p+1}\in\{1,\ldots,n\}$ be the distinct indices such that $\bfU_{i_j}\subseteq \bfC_q$.   Denote by $\bfc$ the mean of $\bfC_q$. Observe that for each $j\in\{1,\ldots,n\}$, the multiset of the $j$-th coordinates of the points of $\bfC_q$ contains at most $s+m$ ones and at least $t$ zeros.  Then for every $j\in \{1,\ldots,n\}$,
 $$\bfc[j]\leq \frac{m+s}{|\bfC_q|}\leq \frac{m+s}{t}\leq\frac{2s}{t}=\frac{1}{8ns}\leq \frac{1}{2s},$$
 and we obtain that 
 \begin{align*}
 \cost_2(\bfC_q)\geq& \sum_{j=1}^{n-p+1}\sum_{\bfx\in \bfU_{i_j}}\|\bfx-\bfc\|_2^2=s\sum_{j=1}^{n-p+1}\|\bfu_{j_i}-\bfc\|_2^2 \\
 \geq&s(n-k+1)\big(1-\frac{1}{2s}\big)^2> s(n-k+1)\big(1-\frac{1}{s}\big)\\ 
 \geq &s(n-k)+s-n \geq s(n-k)+2nm-n\\
 >&m(n-1)+s(n-k)\geq B,
 \end{align*}
 contradicting that  $\cost_2(\bfC_1,\ldots,\bfC_{k+1})\leq B$. 
 
 Because for every nonleaf node $v\in V(T)$, its right child is a leaf,  we have that the clustering is obtained by consecutive cutting each cluster from the set of points.  
 Then  we can assume without loss of generality that there a $k$-tuple of distinct indices $(i_1,\ldots,i_k)$ from $\{1,\ldots,n\}$ such that 
  $\bfC_1=\{\bfx\in\bfX\mid \bfx[i_1]=1\}$ and  
 $$\bfC_j=\{\bfx\in\bfX\mid \bfx[i_j]=1\}\setminus\bigcup_{h=1}^{j-1}\bfC_h$$ 
 for $j=2,\ldots,k+1$. We claim that $S=\{u_{i_1},\ldots,u_{i_k}\}$ is a hitting set for $\mathcal{W}$.
 
The proof is by contradiction. Suppose that $S$ is not a hitting set. 
For every $i\in\{1,\ldots,k+1\}$, let $\bfW_i=\bfC_i\cap \bfW$. Notice that because $S$ is not a hitting set, $\bfW_{k+1}\neq\emptyset$. 
We analyse the structure of clusters and upper bound their means costs. For this, denote by $\bfc_1,\ldots,\bfc_{k+1}$ the means of $\bfC_1,\ldots,\bfC_{k+1}$. 

Let $j\in \{1,\ldots,k\}$ and consider $\bfC_j$ with the mean $\bfc_j$. We have that $\bfC_j=\bfU_{i_j}\cup W_j$. Then $\bfc_j[i_j]=1$. Let $h\in \{1,\ldots,n\}$ be distinct from $i_j$. If $\bfx\in \bfU_{i_j}$, then $\bfx[h]=0$.
Then the multiset of the $h$-th coordinates of the points of $\bfC_j$ contains at most $|\bfW_j|\leq m$
ones and at least  $s$ zeros, and we have that  
 $$\bfc_j[h]\leq \frac{m}{|\bfC_j|}\leq \frac{m}{s}\leq\frac{1}{8mn}.$$ 
 Recall that $\bfx\in \bfW$ has exactly $r$ elements that are equal to one. This implies, that for every $\bfx\in \bfW_j$,
 \begin{equation*}
 \|\bfx-\bfc_j\|_2^2\geq (r-1)\big(1-\frac{1}{8mn}\big)^2\geq (r-1)\big(1-\frac{1}{4mn}\big)
 \end{equation*}
 and, therefore,
  \begin{equation}\label{eq:first-clusters}
 \cost_2(\bfC_j)\geq (r-1)\big(1-\frac{1}{4mn}\big)|\bfW_j|
 \geq (r-1)|\bfW_j|-\frac{1}{4m}|\bfW_j|,
 \end{equation} 
 because $r\leq n$.

 Now we consider $\bfC_{k+1}$ and the corresponding mean $\bfc_{k+1}$. Notice that $\bfC_{k+1}=\bfZ\cup \bfW_{k+1}\cup \bfR$, where $\bfR= \bigcup_{h\in\{1,\ldots,m\}\setminus \{i_1,\ldots,i_k\}}\bfU_h$.
 Let $h\in \{1,\ldots,n\}$. We have that $\bfx[h]=0$ for $\bfx\in \bfZ$. Hence, the multiset of of $h$-th coordinates of the points of $\bfC_{k+1}$ contains at most $|\bfW_{k+1}|+s\leq m+s$
ones and at least $t$ zeros. Therefore,   
 $$\bfc_{k+1}[h]\leq \frac{m+s}{|\bfC_{k+1}|}\leq \frac{2s}{t}\leq\frac{1}{8sn}.$$ 
Since $\bfx\in \bfW$ has exactly $r\leq n$ elements that are equal to one,  for every $\bfx\in \bfW_{k+1}$,
\begin{equation*}
\|\bfx-\bfc_{k+1}\|_2^2\geq  r\big(1-\frac{1}{8sn}\big)^2 \geq r\big(1-\frac{1}{8mn}\big)^2
\geq r\big(1-\frac{1}{4mn}\big)\geq r-\frac{1}{4m}.
\end{equation*}
For every $\bfx\in \bfR$,  $\bfx$ contains a unique nonzero element and, therefore, 
\begin{equation*}
\|\bfx-\bfc_{k+1}\|_2^2\geq \big(1-\frac{1}{8sn}\big)^2\geq 1-\frac{1}{4sn}.
\end{equation*}
Note that $\bfR$ contains exactly $s(n-k)$ points.
Then 
\begin{multline}\label{eq:last-cluster}
\cost_2(\bfC_{k+1})\geq\sum_{\bfx\in \bfW_{k+1}}\|\bfx-\bfc_{k+1}\|_2^2+\sum_{\bfx\in \bfR}\|\bfx-\bfc_{k+1}\|_2^2\\
\geq \big(r-\frac{1}{4m}\big)|\bfW_{k+1}|+s(n-k)\big(1-\frac{1}{4sn}\big)\\
\geq r|\bfW_{k+1}|+s(n-k)-\frac{1}{4m}|\bfW_{k+1}|-\frac{1}{4}.
\end{multline}

Recall that $|\bfW_1|+\cdots+|\bfW_{k+1}|=m$. Then combining (\ref{eq:first-clusters}) and (\ref{eq:last-cluster}), we obtain that 
\begin{equation*}
 \cost_2(\bfC_1,\ldots,\bfC_{k+1})=\sum_{j=1}^{k+1}\cost_2(\bfC_j)
 \geq m(r-1)+s(n-k)+|\bfW_{k+1}|-\frac{1}{2}
 \geq B+\frac{1}{2},
\end{equation*}
because $\bfW_{k+1}\neq\emptyset$.
However, this contradicts that the means cost of $\{\bfC_1,\ldots,\bfC_{k'}\}$ is at most $B$. This means that $S$ is a hitting set for $\mathcal{W}$. 

This concludes the hardness proof for \probEMeanClust. For the median cost, we use exactly the same reduction from \probHS. Then we show that $\mathcal{W}$ has a hitting set $S$ of size at most $k$ if and only if $\bfX$ has an explainable $k'$-clustering of median cost at most $B$. The proof for the forward direction is identical to the proof for the means up to the replacement of $\cost_2$ by $\cost_1$ and of $\|\cdot\|_2^2$ by $\|\cdot\|_1$. For the opposite direction, 
the proof follows the same lines as the above proof for means but gets simplified, because we can assume that medians are  binary. In particular, if the multiset of $h$-th coordinates of the points of a cluster $\bfC_{j}$ contains 
more zeros that ones, then $\bfc_j[h]=0$ for the median $\bfc_j$. Notice that the crucial part of the proof for the means is obtaining upper bounds for the values $\bfc_j[h]$. Now we can immediately assume that $\bfc_j[h]=0$ in all considered cases whenever we upper bound $\bfc_j[h]$, and the lower bounds for the costs in the proof become straightforward. 

To see the second part of the theorem, note that our reduction from  \probHS is polynomial. This immediately implies \classNP-hardness of the considered problems for both measures. For the lower bound up to ETH, we use the well-known fact (see, e.g.~\cite[Chapter~14]{CyganFKLMPPS15}) that \probHS does not admit an algorithm with running time $f(k)\cdot (n+m)^{o(k)}$ for any computable function $f(\cdot)$ up ETH. Because our reduction is polynomial and, moreover, the value of the parameter in the constructed instance is $k'=k+1$, we obtain that an algorithm with running time $f(k)\cdot (n+m)^{o(k)}$ for our problems would contradict ETH.
\end{proof}

\section{Approximate Explainable Clustering}
\label{sec:fractional}

\paragraph{Approximate explainable \probmeanmed{} clustering.} In \probFMeanExpl, we are given a collection of $n$ points $\bfX\subseteq \mathbb{R}^d$, a positive integer $k\leq n$, and a positive real constant $\varepsilon<1$. Then the task is to find a collection of points $\bfY\subseteq \bfX$ with $|\bfY|\geq (1-\varepsilon)|\bfX|$ and an explainable $k$-clustering of $\bfY$ whose $k$-median cost does not exceed the optimum $k$-median cost of an explainable $k$-clustering for the original collection of points $\bfX$. Note that we ask about the construction of $\bfY$ and the corresponding clustering as the decision variant is trivial. Observe also that the optimum cost is unknown a priori. 
 \probFMedExpl differs only by the clustering measure.


\begin{theorem}\label{thm:frac}
    \probFMeanExpl and \probFMedExpl are solvable in $(\frac{8dk}{\epsilon})^k\cdot n^{\Oh(1)}$ time. 
\end{theorem}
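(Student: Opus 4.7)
The plan is to bound the search space of threshold trees by discretization, and then for each candidate tree remove a small set of ``boundary'' points to form the output collection $\bfY$. For each coordinate $i \in \{1,\ldots,d\}$, I would sort the points of $\bfX$ by their $i$-th coordinate and let $T_i$ consist of every $\lfloor \varepsilon n/(8k)\rfloor$-th coordinate value in this sorted order, so that $|T_i| \leq 8k/\varepsilon$ and between any two consecutive values of $T_i$ there are at most $\varepsilon n/(8k)$ points of $\bfX$. Let $\mathcal{T} = \bigcup_{i=1}^d (\{i\} \times T_i)$, so $|\mathcal{T}| \leq 8dk/\varepsilon$. The algorithm enumerates all threshold trees with $k$ leaves whose cuts all lie in $\mathcal{T}$, which can be done recursively in time $(8dk/\varepsilon)^{k}\cdot n^{\Oh(1)}$ (a tree is determined by its shape and one cut from $\mathcal{T}$ per internal node).

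For each enumerated tree $T$ producing clusters $\bfC_1,\ldots,\bfC_k$ of $\bfX$, define the \emph{boundary set}
\[
O(T) = \bigcup_{(i,\theta) \in T} \{\bfx \in \bfX \mid \bfx[i] \text{ is within one discretization bucket on either side of } \theta\},
\]
set $\bfY_T = \bfX \setminus O(T)$, and compute $\sum_j \cost(\bfC_j \cap \bfY_T)$. Output the pair $(\bfY_T, \{\bfC_j \cap \bfY_T\}_j)$ of minimum cost over all candidates.

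For correctness, first note that $|O(T)| \leq 2(k-1)\cdot \lfloor \varepsilon n/(8k)\rfloor \leq \varepsilon n/4$, so $|\bfY_T| \geq (1-\varepsilon)|\bfX|$. Now let $T^*$ be an optimal (not necessarily discretized) threshold tree for $\bfX$ with clusters $\bfC_1^*, \ldots, \bfC_k^*$ and cost $\text{OPT}$. Round $T^*$ to a discretized tree $T^\circ$ by replacing each cut $(i,\theta^*)$ by $(i,\theta)$, where $\theta \in T_i$ is closest to $\theta^*$. Any point $\bfx$ placed in a different cluster by $T^\circ$ than by $T^*$ must, at some cut, have its corresponding coordinate strictly between $\theta^*$ and $\theta$; by the choice of $T_i$ and the definition of $O(T^\circ)$, every such $\bfx$ lies in $O(T^\circ)$. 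Consequently, for each $\bfx \in \bfY_{T^\circ}$ the cluster of $\bfx$ under $T^\circ$ coincides with its cluster under $T^*$, hence $\bfC_j \cap \bfY_{T^\circ} = \bfC_j^* \cap \bfY_{T^\circ}$ for every $j$. Since both the $k$-means and $k$-median costs are monotone under point removal (taking the same center for a smaller point set can only decrease $\min_{\bfc}\sum_{\bfx}\|\bfc-\bfx\|_p^p$),
\[
\sum_{j=1}^k \cost(\bfC_j^* \cap \bfY_{T^\circ}) \leq \sum_{j=1}^k \cost(\bfC_j^*) = \text{OPT}.
\]
Since the algorithm outputs the best candidate and $T^\circ$ is among them, its output has cost at most $\text{OPT}$.

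The main conceptual step is choosing the discretization granularity to simultaneously satisfy two opposing constraints: fine enough that rounding the unknown optimum $T^*$ places every misclassified point inside $O(T^\circ)$, yet coarse enough that $|O(T^\circ)| \leq \varepsilon n$ and the enumeration stays within $(8dk/\varepsilon)^k \cdot n^{\Oh(1)}$ time. The $\varepsilon n/(8k)$-bucket choice above gives both with slack, once one observes that the $k-1$ cuts can each ``lose'' a bucket worth of points. The rest is routine: computing $\bfC_j\cap \bfY_T$ and the cost for a given $T$ is clearly polynomial, and enumerating discretized trees can be done by the same recursive scheme used in the proof of Theorem~\ref{prop:DP} but restricted to thresholds in $\mathcal{T}$.
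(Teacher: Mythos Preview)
Your proposal is correct and follows the same discretize-then-enumerate strategy as the paper: bucket each coordinate by rank into $O(k/\varepsilon)$ candidate thresholds, try every threshold tree over this reduced cut set, and delete the $O(\varepsilon n)$ boundary points so that the rounded optimum $T^\circ$ agrees with the true optimum $T^*$ on what remains (cost monotonicity then finishes). Two minor bookkeeping points that do not affect the argument: you should dispatch the degenerate case $\lfloor \varepsilon n/(8k)\rfloor = 0$ separately (the paper handles $n\le k/\varepsilon$ by brute force), and with your bucket size $\varepsilon n/(8k)$ the $4^k$ tree-shape factor does not cleanly absorb into $(8dk/\varepsilon)^k$---coarsening to buckets of size $\varepsilon n/(2k)$ (still giving $|O(T)|\le \varepsilon n$) fixes the arithmetic.
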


As the proofs for 
both problems are identical, we describe only the algorithm for \probFMeanExpl. 

\begin{proof}

Let $\bfX\subseteq \mathbb{R}^d$ be an instance of \probFMeanExpl, $(T,k,\varphi)$ be the optimal (canonical) threshold tree for explainable $k$-means clustering and $(C_1, \dots, C_k)$ the clustering induced by $(T,k,\varphi)$. The goal of the algorithm will be to guess an approximation of $(T,k,\varphi)$. Since $T$ is a binary tree with $k$ leaves, guessing $T$ only requires $4^{k}$ tries. Guessing $\varphi$ is more complicated however, as there is $d \cdot n$ choice at each node of $T$, which gives potentially $(dn)^k$ possibilities, where $n = |\bfX|$. The idea here will be to guess for every nonleaf node $u$ of $T$ the second element of $\varphi(u)$ up to a precision of $\Oh(\frac{\epsilon n}{k})$, which gives only $\Oh(d \cdot \frac{k}{\epsilon})$ choices at each nonleaf node. 

More formally, let $n' = \lfloor \frac{\epsilon n}{k} \rfloor$ and note first that if $n' = 0$, then $\frac{\epsilon n}{k} < 1$ and thus $n \leq \frac{k}{\epsilon}$. This means that if $n' = 0$, then the algorithm  trying all the possible values of $T$ and $\varphi$, and computing the value of the obtained clustering, runs in time $4^k(\frac{dk}{\epsilon}) \cdot n^{\Oh(1)}$, which ends the proof. From now on, let us assume that $n' \not = 0$. 

Let $U$ denote the set of nonleaf nodes of $T$. Let  $\varphi'\colon U\rightarrow \{1,\ldots,d\}\times \mathbb{R}$ be the function obtained from $\varphi$ by rounding, for every $u \in U$, the value of the first element of $\varphi(u)$ to the closest multiple of $n'$. In other words, if $\varphi(u) = (j,r)$, then $\varphi'(u) = (j,i \cdot n')$ where $i$ is the largest integer such that $i \cdot n' \leq r$. 

Consider now the clustering obtained from the threshold tree $(T,k,\varphi')$. At each node $v \in T$ such that $\varphi(v) = (j,r)$ and $\varphi'(v) = (j,i \cdot n')$, the points $x$ of $\bfX$ that can be misplaced by the $\cut_{\varphi'(u)}(\bfX)$ are exactly the points such that $i \cdot n' < \bfx[i] \leq r$. This means that, if $\bfZ_u$ denotes the set of all the points $x$ such that $i \cdot n' \leq \bfx[j] \leq (i+1)n'$, then $|\bfZ_u| \leq n'$ and the partitions $\cut_{\varphi'(u)}(\bfX \setminus \bfZ_u)$ and $\cut_{\varphi(u)}(\bfX \setminus \bfZ_u)$ are identical. Therefore, if $\bfZ = \bigcup_{u \in U} \bfZ_u$, then $(T,k,\varphi')$ and $(T,k,\varphi)$ induce the exact same clustering on $\bfX \setminus \bfZ$. Note that $|\bfZ| \leq kn' \leq \epsilon n$. 

Therefore the algorithm will try all possible choices for $T$, and for every nonleaf node $u$, it tries all possible values for $\varphi'(u)$ of the form $(j, i \cdot n')$, where $j \in [d]$ and $i \in [\frac{2k}{\epsilon}]$. For each such try, the algorithm also removes the set $\bf Z$ consisting  of all the points $x$ such that $  i \cdot n' \leq \bfx[j] \leq (i+1) \cdot n'$ whenever there exists $u \in U$ such that $\varphi'(u) = (j, i \cdot n')$ and computes the value of the clustering induces by $(T,k, \varphi')$ on $\bfX \setminus \bfZ$. Finally it outputs the set $\bf X \setminus \bfZ$ as well as the threshold tree $(T,k, \varphi')$ which minimises the value of the clustering. 

Note that for every set of choices of $T$ and $\varphi'(u)$ of the form $(j, i \cdot n')$, the set $\bfZ$ has size at most $k \cdot n' \leq \epsilon n$, which implies that the algorithm indeed outputs the desired set and threshold tree. Moreover, since there are at most $4^{k}$ possible trees $T$ and $ d \cdot \frac{2 k}{\epsilon}$ possible choices of $\varphi'(u)$ for every node of the tree, we conclude that the  running time of the algorithm is $(\frac{8dk}{\epsilon})^kn^{\Oh(1)}$.
\end{proof}

\section{Conclusion}\label{sec:concl}
In this paper, we initiated the study of computational complexity of several variants of explainable clustering. Concluding, we would like to outline some further directions of research and state a number of open problems. 

We showed that \probClustExpl admits a polynomial-time approximation with a factor of $(k-1)$. Can this factor be improved in polynomial-time, say, to $\log k$? We proved that \probEMeanClust and  \probEMedClust can be solved in $n^{2d}\cdot (dn)^{\Oh(1)}$ time. Is this result tight? Or is it possible to obtain an $f(d)\cdot (dn)^{\Oh(1)}$ time algorithm for some function $f$? Also, is it possible to obtain approximation schemes parameterized by $k$, i.e., $(1+\varepsilon)$-approximation in $g(k,\varepsilon) (nd)^{\Oh(1)}$ time for some function $g$. Regular $k$-means/median admits such approximation schemes \cite{KumarSS10}. 

\bibliographystyle{plainurl}
\bibliography{Clustering-new}   
\end{document}